\documentclass[letterpaper,10pt,conference,romanappendices]{ieeeconf}

\IEEEoverridecommandlockouts
\overrideIEEEmargins

\usepackage{amsmath,amssymb,theorem}
\usepackage[mathscr]{euscript}
\usepackage{graphicx}
\usepackage{subfig}
\usepackage{algorithm,algorithmic}
\usepackage{psfrag}
\usepackage{cite}
\usepackage{tikz}
\usetikzlibrary{shadows.blur}
\usetikzlibrary{arrows} 
\tikzset{
    >=stealth',
    pil/.style={
           ->,
           thick,
           shorten <=2pt,
           shorten >=2pt,}
}
\usepackage{bbold}
\usepackage{color,xspace}
\parindent = 3ex
\parskip = .5ex 

\newtheorem{theorem}{Theorem}[section]
\newtheorem{lemma}[theorem]{Lemma}

\newtheorem{proposition}[theorem]{Proposition}
\newtheorem{problem}[theorem]{Problem}

\newcommand{\real}{{\mathbb{R}}}

\newcommand{\realnonnegative}{\mathbb{R}_{\ge 0}}
\newcommand{\integernonnegative}{\mathbb{Z}_{\ge 0}}
\newcommand{\integerpositive}{\mathbb{Z}_{> 0}}

\newcommand{\DD}{{\mathcal{D}}}

\newcommand{\PP}{{\mathcal{P}}}

\newcommand{\II}{{\mathcal{I}}}
\newcommand{\MM}{{\mathcal{M}}}
\newcommand{\NN}{{\mathcal{N}}}

\newcommand{\UU}{{\mathcal{U}}}

\newcommand{\HH}{{\mathcal{H}}}

\newcommand{\KK}{\mathcal{K}}
\newcommand{\RR}{\mathcal{R}}

\newcommand{\ragents}{{\mathcal{A}}}

\newcommand{\bound}{\operatorname{bnd}}
\newcommand{\halfspace}{{H}}

\newcommand{\VV}{{\mathcal{V}}}

\newcommand{\cm}[1]{\operatorname{cntr}(#1)}

\newcommand{\unit}[1]{\operatorname{unit}(#1)}
\newcommand{\tbb}{\operatorname{tbb}}
\newcommand{\loc}[1]{\operatorname{loc}(#1)}

\renewcommand{\tilde}{\widetilde}

\newcommand{\proj}{\operatorname{pr}}

\newcommand{\vmax}{v_{\text{max}}}
\newcommand{\timestep}{\Delta t}

\renewcommand{\epsilon}{\varepsilon}
\newcommand{\diam}{\operatorname{diam}}

\newcommand{\gV}{\text{g}V}

\newcommand{\dgV}{\text{dg}V}

\newcommand{\gW}{\text{g}W}
\newcommand{\dgW}{\text{dg}W}

\newcommand{\onevec}{\mathbf{1}}
\newcommand{\dataspace}{(S \times \realnonnegative)^{n^2}}

\newcommand{\algostep}[1]{{\small\texttt{#1:}}\xspace}

\newcommand{\until}[1]{\{1,\dots, #1\}}

\newcommand{\map}[3]{#1: #2 \rightarrow #3}
\newcommand{\svmap}[3]{#1: #2 \rightrightarrows #3}
\newcommand{\setdef}[2]{\{#1 \; | \; #2\}}

\newcommand{\cball}[2]{\overline{B}(#1,#2)}

\newcommand{\csegment}[2]{[#1,#2]}

\newcommand{\TwoNorm}[1]{\|#1\|}

\renewcommand{\hat}{\widehat}

\newcommand{\algoMotionControl}{\texttt{motion control law}\xspace}
\newcommand{\algoOneStep}{\texttt{one-step-ahead update policy}\xspace}
\newcommand{\algoMultiStep}{\texttt{multiple-steps-ahead update  policy}\xspace}
\newcommand{\algoVoronoi}{\texttt{Dominant cell computation}\xspace}

\newcommand{\algoFull}{\texttt{$k$-order self-triggered centroid
    algorithm}\xspace}

\newcommand{\oprocendsymbol}{\hbox{$\bullet$}}
\newcommand{\oprocend}{\relax\ifmmode\else\unskip\hfill\fi\oprocendsymbol}



\renewcommand{\cm}[1]{C_{#1}}


\begin{document}



\title{Self-triggered distributed $k$-order coverage control}

\author{Daniel Tabatabai 
	\and Mohanad Ajina \and Cameron Nowzari \thanks{The authors are
    with the Department of Electrical and Computer Engineering, George Mason University, Fairfax, VA 22030, USA, {\tt\small \{dtabatab,majina,cnowzari\}}@gmu.edu}}
\date{\today}

\maketitle

\begin{abstract} 
A $k$-order coverage control problem is studied where a network of agents must deploy over a desired area. The objective is to deploy all the agents in a decentralized manner such that a certain coverage performance metric of the network is maximized. Unlike many prior works that consider multi-agent deployment, we explicitly consider applications where more than one agent may be required to service an event that randomly occurs anywhere in the domain. The proposed method ensures the distributed agents autonomously cover the area while simultaneously relaxing the requirement of constant communication among the agents. In order to achieve the stated goals, a self-triggered coordination method is developed that both determines how agents should move without having to continuously acquire information from other agents, as well as exactly when to communicate and acquire new information. Through analysis, the proposed strategy is shown to provide asymptotic convergence similar to that of continuous or periodic methods. Simulation results demonstrate that the proposed method can reduce the number of messages exchanged as well as the amount of communication power necessary to accomplish the deployment task.
\end{abstract}

\section{Introduction}\label{se:intro}
This paper studies a multi-agent coordination problem where a network of agents perform a deployment task to statically position themselves over a desired area. For example, a mobile sensor network where it is required to deploy sensors to positions that will maximize total coverage of the desired sensing environment. This is commonly referred to as coverage control. Specific applications include topics such as environmental monitoring \cite{curtin1993autonomous, lu2017cooperative}, survelliance \cite{peters2017coverage}, data collection \cite{rybski2000enlisting, zhong2011distributed}, and search and rescue \cite{macwan2011optimal}. More specifically, we consider a generalization of the coverage problem that extends to scenarios where more than one agent may be required to overlap a region in the coverage area. {\color{black}This is referred to as $k$-order coverage control~\cite{gallais2007areacoverage, wang2009energy, yu2017kcoverage}, where $k > 1$ agents must overlap coverage of the same point $q$.}
Our contributions focus on the development of coordination strategies that will reduce the amount of communication necessary between agents while performing the deployment task. This is accomplished by the design of a self-triggered algorithm where agents autonomously decide when they require information from other agents in the network.

{\color{black}
With respect to coverage control, the majority of previous research has focused on scenerios where an individual agent is capable of servicing events that occur in the agent's respective region of responsibility without the assistance of other agents. As an example, consider a monitoring application where a wirless sensor network must monitor the environment. If a random event occurs in the vicinity of a particular sensor then that sensor has the ability to measure and capture the event independent of other sensors in the network. However, various applications exists where agents do not possess the capability to capture or respond to events independently. These applications require multiple agents to work collectively in order to service events. One example of this type of application is that of Time Difference of Arrival (TDOA) localization \cite{gustafsson2003positioning, gardner1992signal, mellen2003closed} where the requirement is that three or more sensors that are located at different positions must measure the same event. Another example is emergency response vehicles where two or more vehicles may be required to respond to a particular event, such as a fire or burglary. In other scenarios, two or more agents may not be necessary for event handling, but the application may require redundant agents to overlap areas for fail-safe purposes.
}
\paragraph*{Literature review}
The topic of multi-agent coverage control has been studied by a number of authors in the past including the seminal work~\cite{cortes2004coverage} {\color{black}where coverage control based on agents moving to the centroids of a Voronoi partition was introduced.}
In~\cite{poduri2004constrained}, the authors consider the coverage control problem where each node is constrained to have $m$ neighboring nodes. The authors use an approach based on vector potential fields where each node acts as a repelling force in order to maximize coverage and acts as an attracting force in order to satisfy the $m$ 
neighbor constraint. In \cite{pimenta2008sensing}, the authors consider heterogeneous and non-point source nodes as well as non-convex enviroments. In \cite{schwager2007decentralized}, the authors study the problem in the context of using sensor measurements to estimate regions of importance in the mission space thus driving nodes to concentrate in these areas. {\color{black}Common to all the above mentioned works is the fact that they study the coverage control problem in terms of a first-order coverage problem where each agent is solely responsible for covering a sub-region of the mission space. As previously mentioned, the interest of our work is the generalized $k$-order coverage control problem where multiple agents overlap coverage of sub-regions in the mission space.}

The $k$-order coverage control problem was studied in \cite{jiang2015higher,DBLP:journals/corr/JiangSAL17,JIANG201927} where a method using higher-order Voronoi partitions was proposed. The authors present a method for deploying agents over a bounded area when more than one agent must have overlapping coverage of the same point. However, to realize the proposed contol law in \cite{JIANG201927}, it is assumed that continuous communication between agents is achievable. {\color{black}For many real-world systems,} 
continuous communication is not feasible and periodic solutions can be resource inefficient and may not be neccessary. As alternatives to continuous and periodic solutions, self-triggered and event-triggered approaches have been proposed in the literature to handle similiar problems in networked systems~\cite{heemels2012introduction, dimarogonas2010distributed, dimarogonas2009event, mazo2008event, wang2009energy}. For self- and event-triggered solutions, the exact time at which agents perform actions, e.g. wirelessly communicate or update a control signal, is autonomously decided by the agents rather than occurring at periodic time intervals. 


In~\cite{nowzari2012self,CN-JC-GJP:15-acc}, the concepts of self-triggered control was applied to the case of first-order optimal deployment. In our current work, we extend the self-triggered centroid algorithm presented in~\cite{nowzari2012self} by considering the higher-order coverage control problem studied in~\cite{jiang2015higher,DBLP:journals/corr/JiangSAL17,JIANG201927} and develop a self-triggered coordination strategy to relax the synchronous, periodic communication requirement {\color{black}while guaranteeing that each agent moves such that it does not contribute negatively to the task.} 
\paragraph*{Statement of contributions}

The main contribution of this work is the development of a distributed self-triggered control strategy that deploys a set of agents to static locations in a convex area in order to achieve~$k$-order optimal coverage. Our solution relaxes the need for continuous or periodic communication among agents as is done in prior works~\cite{JIANG201927}. More specifically, our algorithm is comprised of two major sub-components. The first being an update decision policy where each agent decides when to acquire new information from neighboring agents through a wireless communication network. The decision to comunicate is based on the level of uncertainty each agent has accumulated over time. This uncertainty is due to not having up-to-date information that results from the lack of communication with other agents. {\color{black}We extend the notion of uncertain spatial partitioning~\cite{evans2008guaranteed, jooyandeh2009uncertain, cheng2010uv} used for optimal deployment in~\cite{nowzari2012self} by the use of $k$-order guaranteed and dual-guaranteed Voronoi partitions.}
The second major sub-component is a motion control law that determines how agents should move given possibly outdated information about the location of other agents in the network. Each agent determines a motion plan that is guaranteed to contribute positively to the higher-order deployment task.

\paragraph*{Organization} 
Section \ref{se:pre} outlines some important notions from computational geometry. Section \ref{se:statement} formally presents the problem statement. Section \ref{se:partition} {\color{black}formulates the concepts of $k$-order guaranteed and $k$-order dual-guaranteed Voronoi partitions.} Section \ref{se:design} presents the algorithm design. In section \ref{se:convergence} convergence analysis of the algorithm is discussed. 
Section \ref{se:simulations} presents simulation results and section \ref{se:conclusions} assimilates the conclusions.

\section{Preliminaries}\label{se:pre}
Let $\realnonnegative$ and $\integernonnegative$ be the set of non-negative real, integer values respectively. With the Euclidean norm defined by $\TwoNorm{\cdot}$

\subsection{Basic geometric notions}
We denote by $\csegment{p}{q} \subset \real^d$ the closed segment with
extreme points $p$ and $q \in \real^d$.  Let
$\map{\phi}{\real^d}{\realnonnegative}$ be a bounded measurable
function that we term \emph{density}. For $S \subset \real^d$, the
\emph{mass} and \emph{center of mass} of $S$ with respect to $\phi$
are
\begin{align*}
  M_S =&\; \int_S \phi (q) dq, \hspace*{5 mm}
  \cm{S} = \frac{1}{M_S} \int_S q \phi (q) dq .
\end{align*}
Let $s_1,s_2,...,s_n$ be $n$ subsets of $S$ and $\{s_1,s_2,...,s_n\}$ be a partition of $S$ then  mass and center of mass with respect to $\phi$ and the $n$ partitions, 
\begin{align*}
M_S = \sum_{i=1}^n M_{s_i}, \hspace*{5 mm}
\cm{S} = \frac{\sum_{i=1}^n M_{s_i}\cm{s_i}}{\sum_{i=1}^n M_{s_i}}
\end{align*}
The \emph{circumcenter} $\text{cc}_s$ of a bounded set $S \subset \real^d$ is the center of a closed ball of minimum radius that contains $S$. The \emph{circumradius} $\text{cr}_s$ of $S$ is the radius of this ball. The diameter of $S$ is $\text{diam}(S) = \max_{p,q \in S} \TwoNorm{p-q}$.

Given $v\in \real^d \setminus \{0\}$, let $\unit{v}$ be the unit
vector in the direction of $v$.  Given a convex set $S \subset
\real^d$ and $p \in \real^d$, let $\proj_S (p)$ denote the orthogonal
projection of $p$ onto $S$, i.e., $\proj_S (p)$ is the point in $S$
closest to $p$.  The \emph{to-ball-boundary} map $\tbb : (\real^d
\times \realnonnegative)^2 \rightarrow \real^d$ takes $(p, \delta, q,r
)$ to
\begin{align*}
  \begin{cases}
    p + \delta \unit{q-p} & \text{if } \TwoNorm{p - \proj_{\cball{q}{r}}
      (p)} \ge \delta ,
    \\
    \proj_{\cball{q}{r}} & \text{if } \TwoNorm{p - \proj_{\cball{q}{r}}
      (p)} \leq \delta .
  \end{cases}
\end{align*}
Figure \ref{fig:tbb} illustrates the action of $\tbb$.

\begin{figure}[htb]
  \centering
  \begin{tikzpicture}
  \draw[white] (0,0) -- (0,5) -- (8,5) -- (8,0) -- (0,0);
 
  \filldraw[fill=white, draw=black] (1.5,1.25) circle (1);
  
  \draw [dashed, thick, <->] (2.75, 3)--(3.75, 4.5);
  \node[below right] at (3.27, 3.75) {$\delta$};
  \filldraw[black] (2.69, 2.9) circle (2pt);
  \node[below] at (2.69, 2.9) {$\tbb(p,\delta,q,r)$};
  \filldraw[black] (3.81, 4.6) circle (2pt);
  \node[right] at (3.87, 4.6) {$p$};  
  
  \filldraw[black] (1.5,1.25) circle (2pt);
  \node[right] at (1.55,1.25) {$q$};
  \draw[dashed, thick, ->] (1.5,1.25) -- (0.7929, 1.9571);
  \node[below left] at (1.1465, 1.6035) {$r$};
  
  \filldraw[fill=white, draw=black] (5,1.25) circle (1);

  \filldraw[black] (5, 1.25) circle (2pt);
  \node[right] at (5.05, 1.25) {$q$};
  \draw[dashed, thick, ->] (5, 1.25) -- (4.2929, 1.9571);
  \node[below left] at (4.6465, 1.6035) {$r$};
  
  \draw [dashed, thick, <->] (5.7, 2.1)--(6.7, 3.6);
  \node[below right] at (6.25, 2.95) {$\leq \delta$};
  \filldraw[black] (5.625, 2.025) circle (2pt);
  \node[right] at (5.75, 2.025) {$\tbb(p,\delta,q,r)$};
  \filldraw[black] (6.76, 3.7) circle (2pt);
  \node[right] at (6.82, 3.7) {$p$};
  
  \end{tikzpicture}
  \caption{Graphical representation of the action of $\tbb$ when (a)
    $\TwoNorm{p - \proj_{\cball{q}{r}} (p)} > \delta$ and (b)
    $\TwoNorm{p - \proj_{\cball{q}{r}} (p)} \leq
    \delta$.}\label{fig:tbb}
\end{figure}

We denote by $\cball{p}{r}$ the closed
ball centered at $p \in S$ with radius $r$ and by $\halfspace_{po} =
\setdef{q\in\real^d}{ \| q - p \| \le \| q- o \|}$ the closed
halfspace determined by $p, o \in \real^d$ that contains $p$.

\subsection{1-order Voronoi partitions}
The methods developed in this work rely heavily on the concept of Voronoi partitioning \cite{senechal1993spatial}. In the following sub-sections a brief discussion of 1-order Voronoi partitions is presented.
Let $S$ be a convex polygon in $\real^2$ and $P=(p_1,\dots,p_n)$ 
be the location of $n$ agents. A \emph{partition} of $S$ is a collection of $n$ polygons $\KK=\{K_1,\dots,K_n\}$ with disjoint interiors whose union is~$S$.
The \emph{Voronoi partition} $\VV(P)=\{V_1,\dots,V_n\}$ of $S$
generated by the points $P = (p_1,\dots,p_n)$ is
\begin{equation*}
  V_i = \setdef{q\in S}{\| q - p_i\| \leq \|q - p_j \| \, , \; \forall
    j\neq i}. 
\end{equation*}
Intuitively, the Voronoi cell $V_i$ represents all the points that are closer to the agent at position $p_i$ than to any of the other agents in the network.
When the Voronoi regions $V_i$ and $V_j$\ are adjacent (i.e., they
share an edge),
$p_i$ is called a \emph{(Voronoi) neighbor} of $p_j$
(and vice versa). $P=(p_1,\dots,p_n)$ is a \emph{centroidal Voronoi configuration} if it satisfies that $p_i = \cm{V_i}$, for all $i \in \until{n}$.

\section{Problem statement}\label{se:statement}


\subsection{k-order Voronoi partitions}
\begin{figure*}
\centering%
\subfloat[1-order Voronoi diagram]{\includegraphics[scale=0.21]{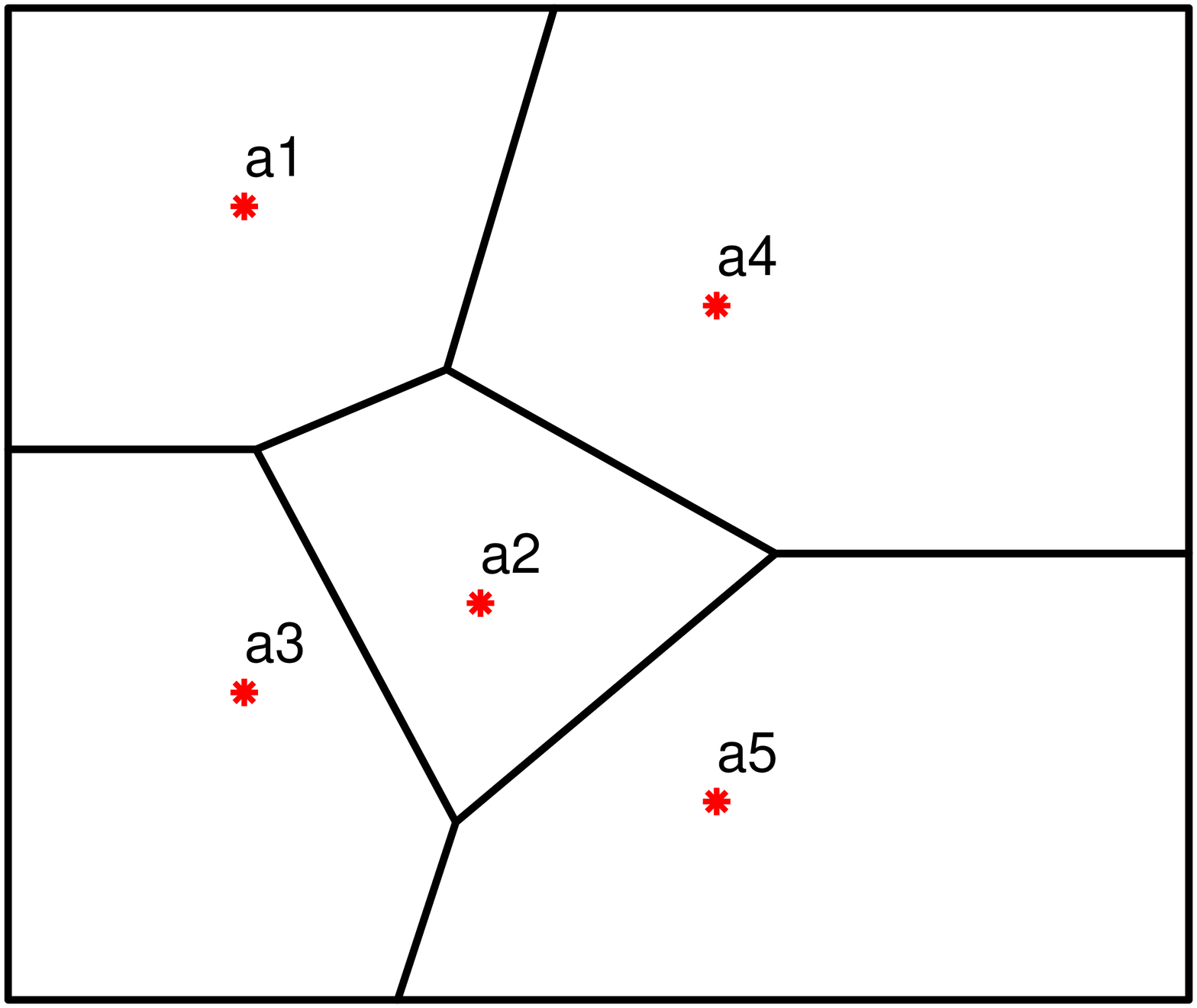} \label{fig:1order-ex}}%
\subfloat[2-order Voronoi diagram]{\includegraphics[scale=0.21]{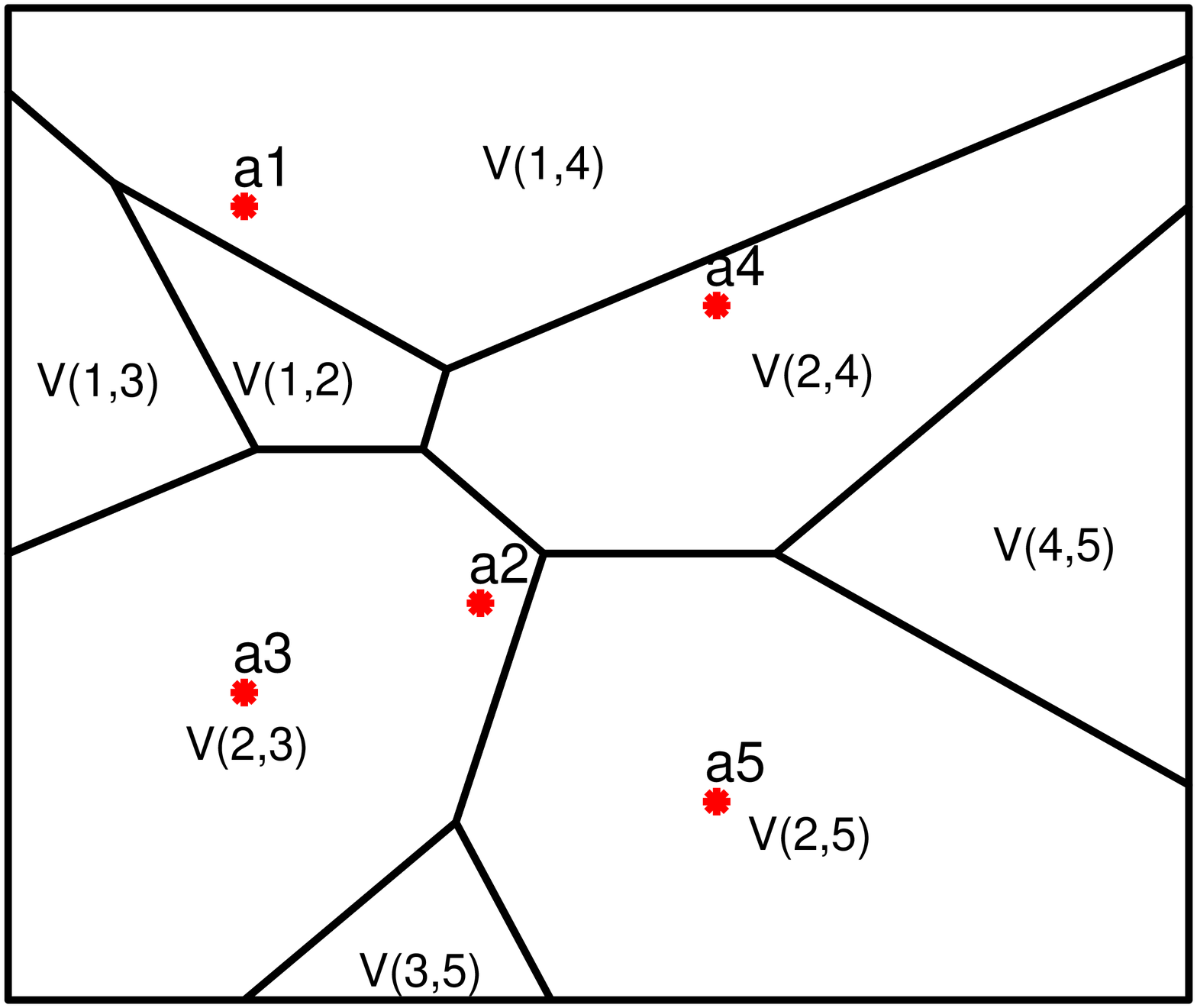} \label{fig:2order-ex}}%
\caption{Example of 1-order Voronoi diagram (left) and 2-order Voronoi diagram (right). In the 1-order case, cells can be represented by the agent that covers the cell. In the case of 2-order partitions, two agents share coverage over a cell. The cells are represented by two agents that cover the particular partition. Note that not all agents share cells. In particular, agents~$(1,5)$ and agents~$(3,4)$ do not share cells, i.e. $V_{(1,5)}=\emptyset$ and $V_{(3,4)}=\emptyset$}%
\label{fig:vorono-ex}%
\end{figure*}
%

{\color{black}Intuitively, a $k$-order Voronoi cell represents all the points that are closer to $k$~agents located at positions $\{p_{i_1},p_{i_2},...,p_{i_k}\} = \PP_I$ than to any of the other agents in the network. A $k$-order Voronoi partition would then be a collection of all the $k$-order Voronoi cells. In the first-order case, the space is partitioned into ~$n$ cells such that each agent is closest to every point in their cell than any of the other agents. However, in the case of a second-order partition the space is partitioned into~$\frac{n(n-1)}{2}$ cells and there are many cells that can be empty because certain agents may not share overlapping responsibility for any points in the space. The difference between the first-order and second-order case can be seen in figure \ref{fig:vorono-ex}. In figure \ref{fig:1order-ex} an example of a first-order partition is illustrated for five agents. Note that each agent is enclosed in their respective cells and there is exactly five cells, one per agent. Figure \ref{fig:2order-ex} illustrates the second-order partition for the same five agents. Note that in figure \ref{fig:2order-ex} the number of cells are greater than the number of agents. Also note that some cells contain multiple agents while other cells do not contain any agents at all. Furthermore, some agent combinations are not associated to a cell at all, e.g. $V_(1,5)$ and $V_(3,4)$. This is due to the fact that these agents do not share points in the space that are mutually closer to them combined than to any of the other agents. A more formal definition of the $k$-order Voronoi partition follows.

Let $S \subset \real^2$ be a convex polygon in a $2$-dimensional space.
Let $\ragents = \{1,\dots, n\}$ be a finite set of integers representing the agents in a $n$-agent network.
Let $\PP = \{p_1, \dots, p_n\} \in S$ be the set of positions of the agents~$\ragents$ in the domain~$S$.
Let $I = (i_1, \dots, i_k) \in \II$ be a $k$-tuple element of the set $\II$ where $\II = \setdef{(i_1, \dots, i_k) \in \ragents^k}{i_1 < \dots < i_k}$ is the set of $k$-tuples in $\ragents^k$ that do not repeat, for example $(i_1,i_2,i_3) \in \II$ , but $(i_3,i_2,I_1) \notin \II$. Let $\PP_I = \{p_{i_1}, \dots p_{i_k} \} \subset \PP$ be the subset of agent positions corresponding to the agents $(i_1, \dots, i_k) \in \II$. The collection of all elements $I \in \II$ that include a particular agent~$i$ is denoted by $\II^i$ where $\II^i = \setdef{I \in \II}{\forall~1\leq\alpha\leq k, i = I_\alpha}$
A $k$-order~\emph{partition} of $S$ is a collection of $m$ polygons $\RR=\{R_1,\dots,R_n\}$ with disjoint interiors whose union is~$S$ and where an element in $\RR$ is associated with $k$-agents.}

The $k$-order Voronoi partition of a convex polygon $S$ can be defined as follows. Given a set of agents $\ragents$ with positions $\PP$. For $k < n$, let $\PP_I \subset \PP$ with $|\PP_I|=k$. Then the $k$-order Voronoi region associated with agents $I=(i_1,\dots,i_k)$
with generating sites $P_I=(p_{i_1},\dots,p_{i_k})$ is defined as 
\[
V_I = \setdef{q \in S}{\TwoNorm{q-p} \leq \TwoNorm{q-p'},
 ~\forall~ p \in \PP_I, ~\forall~ p' \in \PP \setminus \PP_I}.
\]
For example, if $k=2$ then $\PP_I = \{p_i,p_j\}$ and the second-order Voronoi cell for agents $(i,j)\in\mathcal{I}$ becomes,
\begin{align*}
V_{ij} = \{q\in S ~|~ & \TwoNorm{q-p_i} \leq \TwoNorm{q-p'}, \\ 
& \TwoNorm{q-p_j} \leq \TwoNorm{q-p'}, ~\forall~ p' \in \PP \setminus \{p_i,p_j\} \}
\end{align*}
For every point $q$ in $V_{I}$, the distance from $q$ to any agent position in $\PP_I$ is less than or at most equal to the distance from $q$ to all other agent positions not in $\PP_I$. For $k=2$, the second-order Voronoi partition with $I=(i,j)$ and $\PP_I=\{p_i,p_j\}$ would mean that the two agents $i$ and $j$ are closer to or at most as close to all the points in $V_{ij}$ than any of the other agents $\ragents \setminus \{i,j\}$. An alternative interpretation would be that the agents $i$ and $j$ are considered responsible for the region defined by $V_{ij}$. 

Combining all $k$-order Voronoi regions in $S$, the $k$-order Voronoi partition of the environment $S$ becomes
$\VV(P) = \big\{V_I\big\}_{I\in\mathcal{I}}$. The environment $S$ can be considered as the union of all $k$-order Voronoi cells $S = \bigcup_{I\in\mathcal{I}}V_I$. Figure~\ref{fig:vorono-ex} presents an example of the difference between a first-order (\ref{fig:1order-ex}) and second-order (\ref{fig:2order-ex}) Voronoi partition for five agents.
For any agent~$i$ with position $p_i \in \PP$, there can be multiple sets $\PP_I \subset \PP$ that contain $p_i$ meaning that an agent $i$ located at position $p_i$ can be responsible for multiple $k$-order Voronoi cells. The collection of $k$-order Voronoi cells associated with agent~$i$ is given by $\VV^i = \{V_I\}_{I\in\II^i}$. All $k$-order cells associated with agent~$i$ can be combined to form a single region of $S$ that agent~$i$ is responsible for and this cell is referred to as the dominant region of agent~$i$.
The dominant region for agent~$i$ is be defined by
\[
W_i = \bigcup_{I \in \II^i} V_I.
\]
The dominant cell $W_i$ represents the region of $S$ that agent $i$ is responsible for covering. Note that the first-order cell $V_i$ and the $k$-order cell $V_{I}$ are not equivalent, but both are convex. However, the dominant cell $W_i$ may not be convex. The $k$-order neighbors of agent $i$ is denoted by $\NN_i$. For a $k$-order Voronoi partition, $P=(p_1,\dots,p_n)$ is a {\color{black}\emph{centroidal~$k$- order Voronoi configuration}} if it satisfies $p_i = \cm{W_i}$, for all $i \in \until{n}$. Next, optimal deployment for $k$-order Voronoi partitioning is discussed.

\subsection{Objective for higher-order coverage}\label{se:objective-higher-order}
The interest is in applications where~$k > 1$ agents are required to service an event occuring at a random point $q \in S$. This is in contrast to the 1-order problem where for any point $q \in S$ only one agent is responsible. In order to optimally deploy agents throughout the mission space, an objective function for the higher order deployment problem must be defined. For the 1-order case, from \cite{cortes2004coverage}, the objective function in terms of Voronoi partitions is defined as 
\begin{equation} \label{eq:HH_VV}
\HH (P) = \sum_{i=1}^{n} \int_{V_i}\TwoNorm{q-p_i}^2 \phi(q) dq
\end{equation}
The objective here is to minimize the distance from agent $i$'s position $p_i$ to all points $q \in V_i$. Taking advantage of the parallel axis theorem, $\HH (P)$ may be expressed as,
\begin{equation} \label{eq:HH_VV_polar}
\HH (P) = \sum_{i=1}^{n}J_{V_i,C_{V_i}} + \sum_{i=1}^{n}M_{V_i}\TwoNorm{p_i - C_{V_i}}^2
\end{equation}
where $J_{V_i,C_{V_i}}$ is the polar moment of inertia of the 1-order Voronoi cell $V_i$ centered at the centroid $C_{V_i}$. Taking the partial derivative of \eqref{eq:HH_VV_polar} with respect to $p_i$ and evaluating at zero will produce the minimum $\HH$ at position $p_i^*$ for agent $i$. The partial derivative of \eqref{eq:HH_VV_polar} with respect to $p_i$ is given by,
\[
\frac{\partial{\HH}}{\partial{p_i}} = M_{V_i}(p_i - C_{V_i})
\]
This demonstrates that the objective function $\HH$ in the 1-order case is minimal when $p_i$ is located at the centroid $C_{V_i}$.

For Voronoi partitions of the $k$-order, a similar approach to the order-$1$ partition can be followed. In \cite{JIANG201927} an objective function for higher-order coverage control with a general performance measure was introduced and a detailed derivation with performance measured defined by Euclidean distance for $k=2$ was presented. For completeness, the objective function is restated for arbitrary $k$. The objective function in terms of a $k$-order partition of $S$ is defined as,

%
%
%
%
%

\begin{equation} \label{eq:HH-RRk}
\HH (P,\RR) = \frac{1}{k}\sum_{I\in \II} \int_{R_I} f(q,p_1,\dots,p_k) \phi(q) dq.
\end{equation}
Where $f(q,\cdot)$ is the performance measure given by,
\[
f(q,p_1,\dots,p_k) = \sum_{i=1}^{k} \TwoNorm{q-p_i}^2.
\]
The objective function in terms of $k$-order Voronoi partitions is defined as,
%
%
%
\begin{align} \label{eq:HH_VVk}
\begin{split}
\HH (P) &= \frac{1}{k}\sum_{I\in \II} \int_{V_I} f(q,p_1,\dots,p_k) \phi(q) dq \\
&= \frac{1}{k}\sum_{I\in \II} \int_{V_I} \Big( \sum_{i=1}^{k} \TwoNorm{q-p_i}^2 \Big) \phi(q) dq.
\end{split}
\end{align}
Unlike like the first-order Voronoi objective function where the integration occurred over each cell and there was a cell for each agent, the $k$-order case does not have a one-to-one relationship between cells and agents. The performance measure is based on the distance $k$-agents are from each point $q$ in $V_I$. However, by manipulation, the objective function can be written in terms of the contribution of each agent separately. By distributing the integral,
\begin{align*}
\HH (P) = \frac{1}{k}\sum_{I\in \II} \Big[ & \int_{V_I}  \TwoNorm{q-p_{i_1}}^2  \phi(q) dq ~+ \dots \\ & \dots ~+ \int_{V_I} \TwoNorm{q-p_{i_k}}^2  \phi(q) dq \Big]
\end{align*}
and summing over all cells for each agent,
\begin{align*}
\HH (P) = \frac{1}{k}\sum_{I\in \II} & \int_{V_I}  \TwoNorm{q-p_{i_1}}^2  \phi(q) dq ~+ \dots \\ & \dots ~+ \frac{1}{k}\sum_{I\in \II}\int_{V_I} \TwoNorm{q-p_{i_k}}^2  \phi(q) dq,
\end{align*}
the higher-order objective function can be expressed in terms of the polar moment of inertia,
\begin{align} \label{eq:HH-VVk-polar}
\begin{split}
\HH (P) = \sum_{I\in \II} \Big[ J_{V,C_{V}} & + \frac{1}{k} M_{V_I}\TwoNorm{p_{i_1} - C_{V_I}}^2 + \dots \\ & + \frac{1}{k} M_{V_I}\TwoNorm{p_{i_k} - C_{V_I}}^2 \Big],
\end{split}
\end{align}
From (\ref{eq:HH-VVk-polar}), it can be seen that the value of $\HH$ depends on the distance from an agent to the centroid of a given cell. Clearly an agent cannot be located at the centroid of all the cells it is responsible for. To solve for the optimal location for agents to be located, the function $\HH$ is described in matrix form as follows,
\begin{align*} \label{eq:HH_VVk2_polar}
\HH (P) &= \onevec^\top (\mathbf{J}_{V_I,C_{V_I}}) \onevec \\
&+ \frac{1}{k}\big( p_{i_1}\onevec - \mathbf{C}_{V_{i_1}} \big)^\top \mathbf{M}_{V_{i_1}} \big( p_{i_1}\onevec - \mathbf{C}_{V_{i_1}} \big) \\
&+ \dots \\
&+ \frac{1}{k}\big( p_{i_k}\onevec - \mathbf{C}_{V_{i_k}} \big)^\top \mathbf{M}_{V_{i_k}} \big( p_{i_k}\onevec - \mathbf{C}_{V_{i_k}} \big).
\end{align*}
Where $\onevec$ is a vector of ones, $\mathbf{J}_{V_I,C_{V_I}}$ is a diagonal matrix, $\mathbf{C}_{V_{i}}$ is a vector of cell centroids associated with agent~$i$, and $\mathbf{M}_{V_{i}}$ is a diagonal matrix with elements on the diagonal represent the mass of the respective cell. Now the the optimal position $p_i^*$ for agent~$i$ can be solved by,
\begin{align*}
p_{i}^{*} &= \big(\mathbf{1}^\top \mathbf{M}_{V_i} \mathbf{1} \big)^{-1} \big( \mathbf{1}^\top \mathbf{M}_{V_i} \mathbf{C}_{V_i} \big) \\
&= \frac{\sum_{j=1}^{|\VV^i|} M_{V_j^i}\cm{V_j^i}}{\sum_{j=1}^{|\VV^i|} M_{V_j^i}} = C_{W_i}.
\end{align*}
$C_{W_i}$ is the centroid of the dominant cell $W_i$. As mentioned in the previous section, the cell $W_i$ is the dominant cell of agent $i$, which is the union of all the $k$-order Voronoi cells associated with agent $i$.
The objective function $\HH$ is minimal when $p_i$ is located at the centroid $C_{W_i}$ of the dominant cell $W_i$. This leads to the following lemma.
\begin{lemma} \label{lm:H-non-incr}
Given $P \in S^n$ and a k-order partition $\RR$ of $S$,
\[
\HH(P,\VV(P)) \leq \HH(P,\RR),
\]
i.e., the optimal partition is the $k$-order Voronoi partition. For $P' \in S$ with $\TwoNorm{p_{i}'-C_{W_i}} \leq \TwoNorm{p_i - C_{W_i}}$, $i \in \{1,\dots,n\}$,
\[
\HH(P',\RR) \leq \HH(P,\RR),
\]
i.e., the optimal positions of agents are the centroids.
\end{lemma}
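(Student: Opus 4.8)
The statement contains two independent claims, and the plan is to prove each by reusing the parallel-axis decomposition already developed for \eqref{eq:HH-VVk-polar}. Throughout I write $I=(i_1,\dots,i_k)$ and take the integrand of \eqref{eq:HH-RRk} on a cell indexed by $I$ to be $\sum_{\alpha=1}^k \TwoNorm{q-p_{i_\alpha}}^2$.

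\textbf{Optimality of the Voronoi partition.} To establish $\HH(P,\VV(P)) \le \HH(P,\RR)$ for fixed $P$, the plan is to reduce the comparison to a pointwise inequality between integrands. Since $\RR$ is a partition of $S$, almost every $q\in S$ lies in the interior of exactly one cell, so I would write $\HH(P,\RR) = \frac{1}{k} \int_S g_\RR(q)\,\phi(q)\,dq$, where $g_\RR(q) = \sum_{\alpha=1}^k \TwoNorm{q-p_{i_\alpha}}^2$ for the unique cell containing $q$. It then suffices to show that for each fixed $q$ the value $g_\RR(q)$ is bounded below by $\min_{I\in\II}\sum_{\alpha=1}^k \TwoNorm{q-p_{i_\alpha}}^2$ and that the $k$-order Voronoi assignment attains this minimum. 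The key combinatorial fact is that the index set minimizing the sum of squared distances is exactly a set of $k$ agents nearest to $q$; this follows from a one-step exchange argument, since replacing a chosen agent by an unchosen nearer one strictly decreases the sum. By the definition of $V_I$, every agent of $\PP_I$ is at least as close to any $q\in V_I$ as every agent outside $\PP_I$, so $\PP_I$ realizes the minimizer and the Voronoi partition attains the pointwise lower bound, giving the integral inequality.

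\textbf{Optimality of the centroids.} Here I would fix the partition $\RR$ and regard $\HH(\cdot,\RR)$ as a function of the positions alone. First I would regroup the double sum in \eqref{eq:HH-RRk} by agent rather than by cell: each term $\TwoNorm{q-p_{i_\alpha}}^2$ integrated over its cell contributes to exactly one agent, and since tuples have distinct entries, agent $i$ appears once in each cell indexed by $\II^i$, giving $\HH(P,\RR) = \frac{1}{k} \sum_{i=1}^n \int_{W_i} \TwoNorm{q-p_i}^2 \phi(q)\,dq$ with $W_i = \bigcup_{I\in\II^i} R_I$ the dominant cell under $\RR$. Applying the parallel-axis theorem to each $W_i$ yields $\HH(P,\RR) = \frac{1}{k}\sum_{i=1}^n J_{W_i,C_{W_i}} + \frac{1}{k}\sum_{i=1}^n M_{W_i}\TwoNorm{p_i - C_{W_i}}^2$. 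Because $\RR$ is held fixed, the inertia terms $J_{W_i,C_{W_i}}$ and the centroids $C_{W_i}$ are independent of $P$, so only the second sum varies with the positions. The hypothesis $\TwoNorm{p_i' - C_{W_i}} \le \TwoNorm{p_i - C_{W_i}}$ together with $M_{W_i}\ge 0$ then gives $\HH(P',\RR)\le\HH(P,\RR)$ term by term.

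\textbf{Main obstacle.} The regrouping and the parallel-axis step are routine; the part requiring genuine care is the pointwise argument of the first claim, specifically confirming that $V_I$ coincides with the set of points for which $\PP_I$ minimizes the sum of squared distances. The subtlety is the treatment of ties, i.e.\ points equidistant from a chosen and an unchosen agent: these lie on the cell boundaries, which form a set of measure zero and hence do not affect the integral, so the almost-everywhere pointwise minimization suffices to conclude the integral inequality.
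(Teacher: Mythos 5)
Your proposal is correct, and it is worth noting that the paper never actually writes a proof of this lemma: the centroid half is justified only by the derivation in Section~\ref{se:objective-higher-order} (regrouping the sum by agent, invoking the parallel-axis theorem, and solving the resulting quadratic in $p_i$ via the matrix form), and the partition half is stated without argument, implicitly deferred to~\cite{JIANG201927}. Your second part follows essentially the same route as the paper's derivation, just phrased as a term-by-term comparison $M_{W_i}\TwoNorm{p_i'-C_{W_i}}^2 \leq M_{W_i}\TwoNorm{p_i-C_{W_i}}^2$ rather than by setting a gradient to zero; this is arguably cleaner since it directly yields the stated monotonicity rather than only characterizing the minimizer. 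Your first part supplies an argument the paper omits entirely, and the pointwise exchange argument is the right one: for fixed $q$ the minimizing $k$-subset is a set of $k$ nearest generators, which is exactly the defining property of $V_I$, and ties only occur on measure-zero boundaries. The one interpretive choice you make --- reading $C_{W_i}$ in the second claim as the centroid of the dominant region induced by the fixed partition $\RR$ (i.e.\ $W_i=\bigcup_{I\in\II^i}R_I$) rather than by the Voronoi partition --- is the charitable and necessary one, since the claim is false for arbitrary $\RR$ under the other reading; the paper is silent on this distinction, so flagging it as you implicitly do is a genuine improvement in precision.
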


As discussed in~\cite{JIANG201927}, for continuous control and communication the gradient descent control law is given by~$u_i = -k(p_i - C_{W_i})$ for gain~$k > 0$.
However, implementing this in continuous time assumes  that agents have exact position information about their neighbors at all times. Instead, we next discuss how to relax this requirement without resorting to a synchronous, periodic implementation.

\subsection{Communication between agents}\label{se:a2a-com}

We assume agent~$i$ has access to its own position~$p_i(t)$ at all times~$t \in \integernonnegative$, but must communicate with neighbors~$j \in \NN_i$ 
to obtain their positions~$p_j$. Similar to~\cite{nowzari2012self}, a request-response communication model is used where agent $i$ is able to request position~$p_j$ from agent~$j$ and agent~$j$ immediately responds with this information. We assume that packet loss does not occur and that round-trip latency is negligible such that agent~$i$ can request and receive information instantaneously i.e. the action of requesting and responding information occurs within the same timestamp.

More specifically, let $\{t_\ell^i\}_{\ell \in \integernonnegative} \subset \integernonnegative$ be the sequence of times at which agent~$i$ requests information from some neighbor~$j \in \NN_i$. Then, agent~$i$ only has access to the position of agent~$j$ at these times, e.g., at timestep~$t$ agent~$i$ has access to $\{p_j(t')\}_{t' \in \{t_\ell^i | t_\ell^i \leq t\}}$. 

\subsection{Agent state representation}\label{se:agent-data}
If an agent does not request information on every timestep then that agent does not have access to the current position of other agents. Therefore, agent~$i$ maintains state information pertaining to the most recent known position of agent~$j$ in addition to information that is able to model the evolution of uncertainty over time that exists with respect to agent~$j$'s current position. Given that agent $i$ has acquired position $p_j(t_\ell)$ from agent $j$ at timestep $t_\ell$, let $\tau > t_\ell$ be the amount of time that has elapsed since agent $i$ has communicated with agent $j$. Then the position $p_j(t)$ where $t = t_\ell + \tau$ will be unknown to agent $i$ at time $t$. However, if the maximum speed $\vmax$ for agent $j$ is known then agent $i$ can determine the set of all possible positions where agent $j$ could have traveled to in time duration $\tau$.  The set of possible positions for agent $j$ can be represented by a closed ball with center at $p_j(t_\ell)$ and radius $r_j = \vmax \tau_j$. To maintain state, each agent stores $p_j(t_\ell)$ and $r_j$ in memory for every agent in the network. The data storage for agent~$i$ is then defined by,
\begin{equation}\label{eq:agent-state-model}
\DD^i = \big((p_1^i,r_1^i),\dots,(p_n^i,r_n^i)\big)\in (S \times \realnonnegative)^n
\end{equation}
where $r_i^i=0$ for all time since it is assumed that agent~$i$ always has access to it's own position $p_i^i$ at every timestep. There exists two methods for which the contents of the data structure $\DD^i$ may be updated. The first is a time evolution update where all values $r_j^i$ increase in magnitude based on the the time duration $\tau$. The second update method, referred to as the information/position update, corresponds to the acquisition of a new position value $p_j^i$ via means of communication with agent~$j$. When a position update occurs for $p_j^i$, the value $r_j^i$ is reset i.e. $r_j^i=0$. This is due to the fact that the exact position of agent~$j$ is known at the instance in time that $p_j^i$ has been received and stored in memory by agent~$i$. In addition, two explicit methods for agent~$i$ to extract information from $\DD^i$. The first is the map $\map{\text{loc}}{(S \times \realnonnegative)^n}{S^n}$ that allows agent~$i$ to extract position information $(p_i^1, \dots, p_i^n))$ from $\DD^i$. The second extraction map $\map{\pi}{(S \times \realnonnegative)^n}{(S \times \realnonnegative)^m}$, where $m \leq n$, allows agent~$i$ to extract a subset $\pi(\DD^i) \subset \DD^i$ from data storage.

\subsection{Agent dynamics}\label{se:agent-dyn}
Considering the set $\ragents$ of agents moving in a convex polygon $\mathcal{S}$ with positions $P = (p_1,\dots,p_n)$. We consider discrete-time, single-integrator dynamics
\begin{align}\label{eq:dynamics}
p_i(t+1) = p_i(t) + u_i(t) \timestep ,
\end{align}
where~$\timestep > 0$ denotes the length of time of one timestep, and~$u_i(t)$ denotes the input at timestep~$t$ with $\TwoNorm{u_i(t)} \leq \vmax$ for each agent~$i \in \ragents$. The interest is in optimally deploying these agents in the domain~$S$ such that $k$ agents overlap responsibility for every point $q \in S$. Equipped with a communication model, a state data model, and agent dynamics the formal problem may now be presented by the following,
\begin{problem}\label{problem:main}
Given a set $\ragents = \{1,\dots, n\}$ of agents moving in a convex polygon $\mathcal{S} \subset \real^2$ with dynamics~\eqref{eq:dynamics}, maximum speed $\vmax > 0$, spatial density $\phi : \mathcal{S} \rightarrow \real$, and only depending on information local to agent $i$, find a distributed communication
and control strategy such that $p_i \rightarrow C_{W_i}$.
\end{problem}

Based on the data that each agent stores in memory, the exact computation of the $k$-order Voronoi cell cannot necessarily be achieved at each timestep. Next we address the issue of space partitioning with uncertainty for general cases of $k$-order.

\section{Space partition with uncertain information}\label{se:partition}
If agent~$i$ does not have access to the exact location $p_j$ of agent~$j$, then the uncertain position of agent~$j$ with respect to agent~$i$ can be represented to be within a set of points $D_j \in S$. This set $D_j$ represents all the possible points where agent~$j$ is guaranteed to be located relative to agent~$i$. The consequence of this representation is that agent~$i$ cannot compute it's dominant region exactly. However, because the position of agent~$j$ is guaranteed to be constrained to the set $D_j$, it is possible for agent~$i$ to compute regions in $S$ that pertain to a) the points that are certain to be part of its dominant cell, b) the points that are certain not to be part of agent~$i$'s dominant cell, and c) the region where it is uncertain if the points belong to agent~$i$'s dominant cell or not. The region of points that are certain to be part of agent~$i$'s dominant cell is referred to as the $k$-order guaranteed dominant cell of agent~$i$. The region of points that are certain to not be a part of agent~$i$'s dominant cell is referred to as the $k$-order dual-guaranteed dominant cell. Similar to the case of certain sites, we construct the guaranteed and dual-guaranteed dominant cell of an agent~$i$ by means of the $k$-order guaranteed and dual-guaranteed Voronoi cells. The the $k$-order guaranteed Voronoi partition is described next.
\subsection{k-order guaranteed Voronoi partitions}
\begin{figure*}
\centering%
\subfloat[$r=0$]{\label{fig:gV-ex-r0}\includegraphics[scale=0.16]{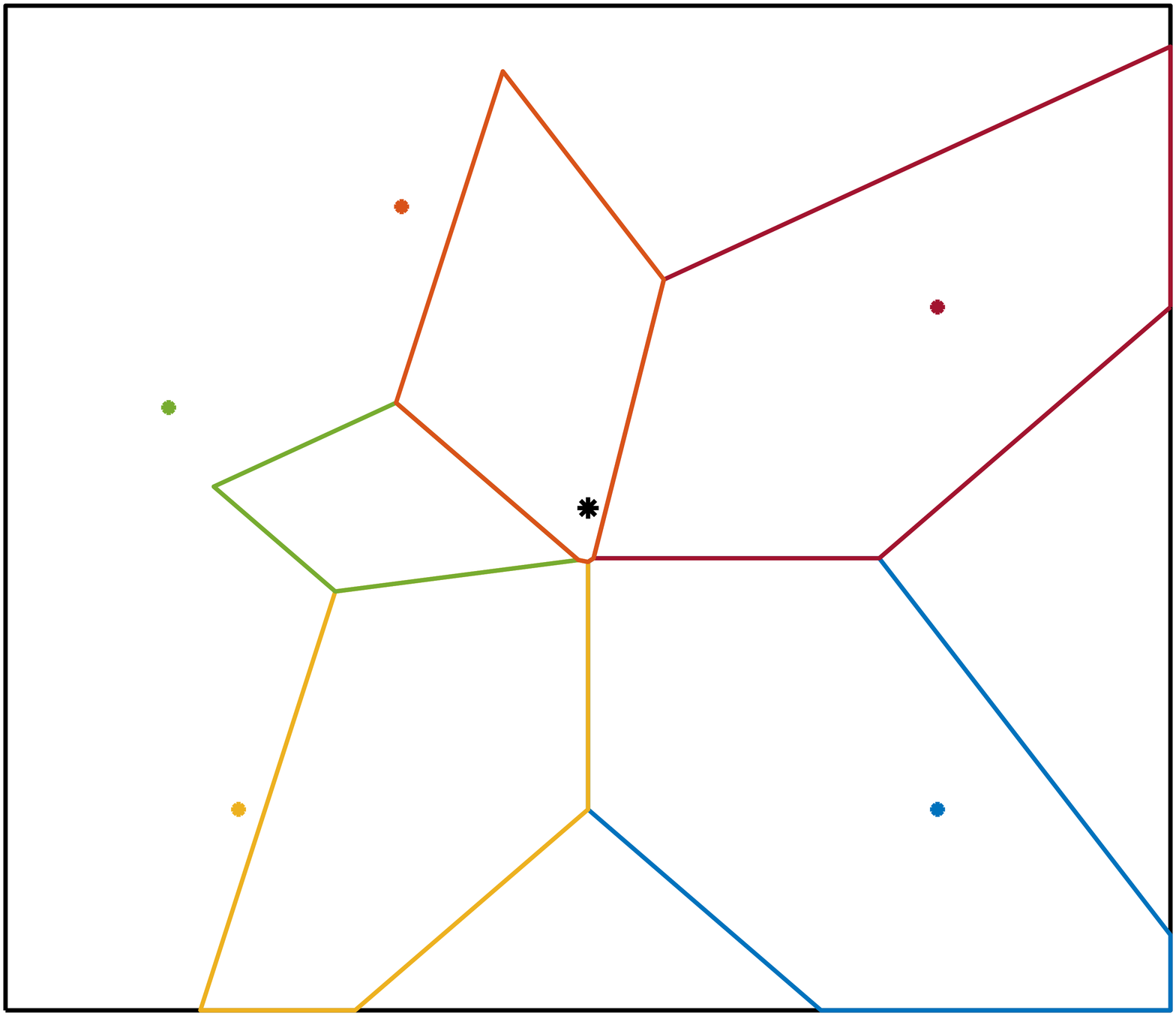}}%
\subfloat[$r=1$]{\label{fig:gV-ex-r1}\includegraphics[scale=0.16]{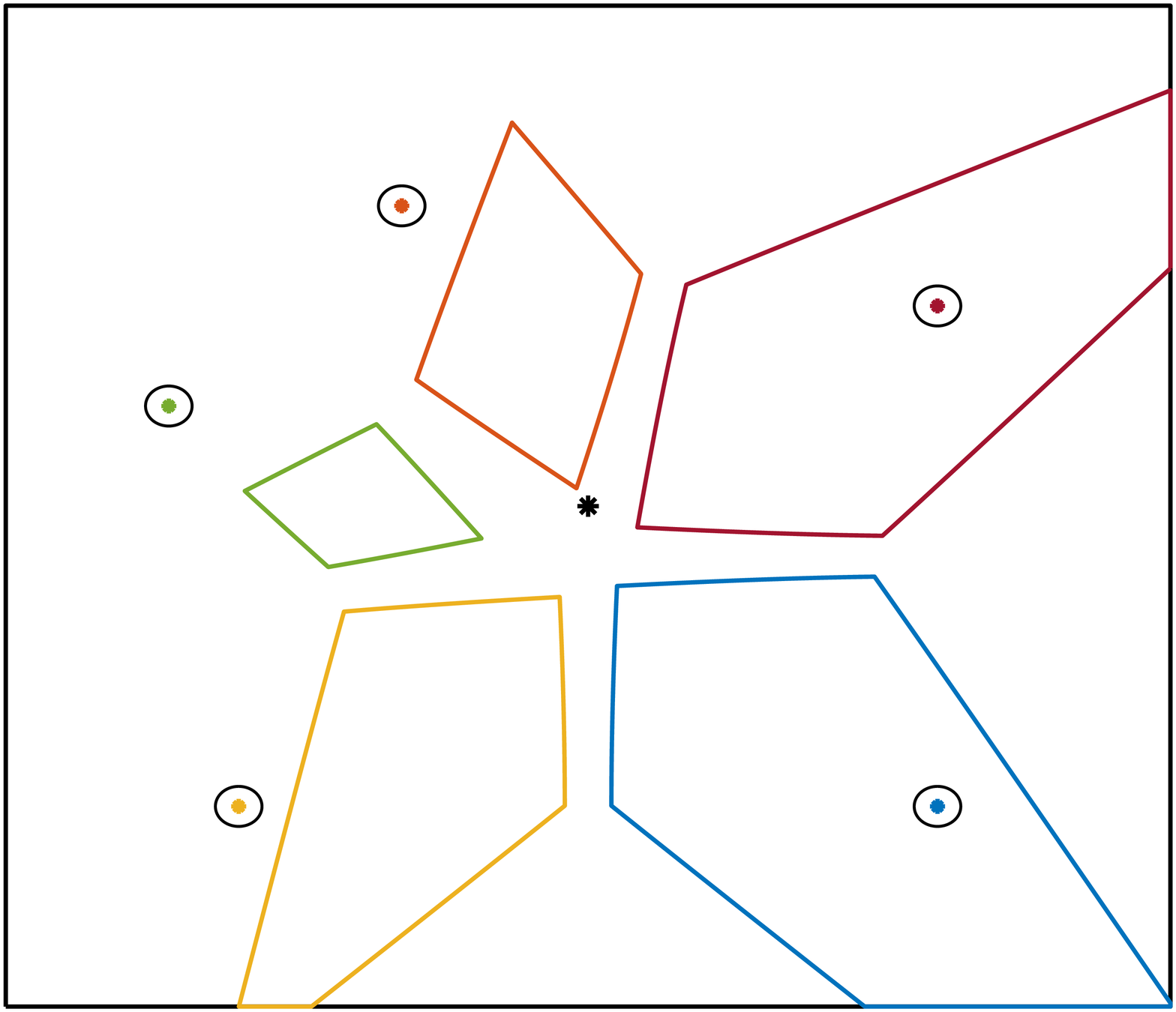}}%
\subfloat[$r=2$]{\label{fig:gV-ex-r2}\includegraphics[scale=0.16]{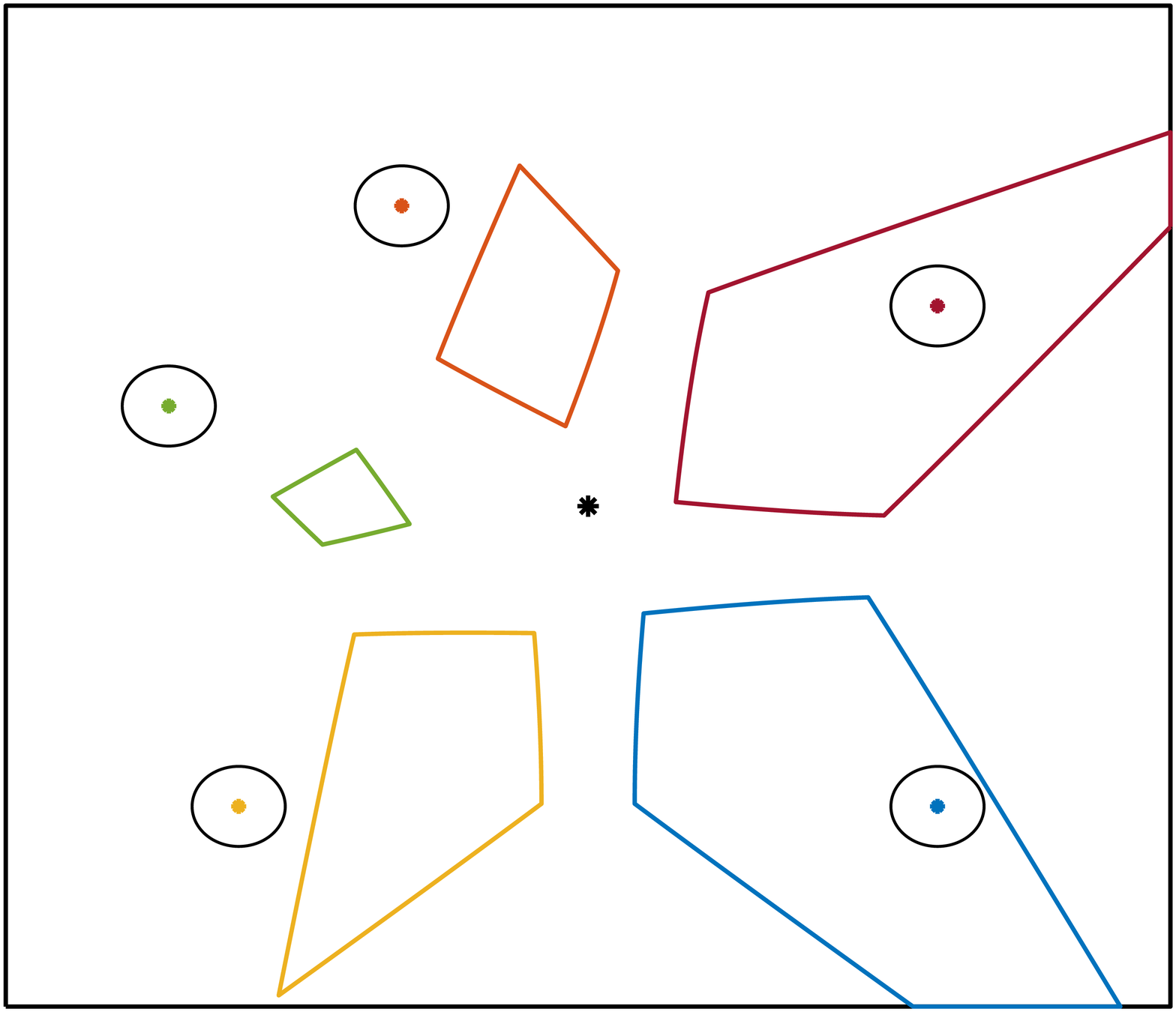}}%
\caption{The guaranteed $k$-order Voronoi cells ($k=2$) for a single agent represented by the black asterisk located close to the center of each diagram. Together, the diagrams illustrate the difference between cells when the radii changes, $r=0$~(a), $r=1$~(b), $r=2$~(c).}
\label{fig:gV-ex}%
\end{figure*}
To assist in the exposition that follows, the first-order guaranteed Voronoi cell is briefly mentioned. The first-order guaranteed Voronoi cell for agent~$i$ is given by,
\[
\gV_i = \Big\{ {q \in S}~\Big|~{\max_{x \in D_i}{\TwoNorm{q-x}} \leq \min_{y \in D_j}{\TwoNorm{q-y}}, ~\forall j \neq i} \Big\}.
\]
The cell $\gV_i$ contains the points of $S$ that are guaranteed to be closer to $p_i$ than to any other $p_j$, with $i \neq j$.
The uncertain regions $D_i$ and $D_j$ are considered to be closed balls $\cball{p_i,r_i}$ and $\cball{p_j,r_j}$ centered at $p_i$ and $p_j$ with radius $r_i$ and $r_j$, respectively. The set $D = \{D_1, \dots, D_n\}$ is the collection of uncertain regions for $n$~agents.
Similar to the discussion of $k$-order Voronoi partitions of certain sites where $I = (i_1,\dots,i_k) \in \II$ and $\PP_I \subset \PP$, a subset of $D$ is defined by $D_I = \{ D_1,\dots,D_k \}$
Given $D_I$ and with $D_J = D \setminus D_I$, the $k$-order guaranteed Voronoi cell associated with $I$ agents is defined by,
\begin{align*}
\gV_I = \Big\{ {q \in S}~\Big| 
~\max_{x \in D_i} & {\TwoNorm{q-x}} \leq \min_{y \in D_j}{\TwoNorm{q-y}} \\
& ~\forall D_i \in D_I, ~\forall D_j \in D_J  \Big\}
\end{align*}
The $k$-order guaranteed Voronoi cell represents the points that are guaranteed to be closer to the $k$-agents in $I$ with positions in $\PP_I$ than to the agents with positions in $\PP_J$. For example, with $k=2$ the $I$-agents becomes $I = \{D_i,D_j\}$ and has positions $\PP_I = \{p_i,p_j\}$ such that the second-order guaranteed Voronoi cell associated with agents $i$ and $j$ is given by,
\begin{align*}
\gV_{ij} = \bigg\{ {q \in S}~\Big|& ~{\max_{x \in D_i}{\TwoNorm{q-x}} \leq \min_{y \in D_J}{\TwoNorm{q-y}}},\\ & ~{\max_{x \in D_j}{\TwoNorm{q-x}} \leq \min_{y \in D_J}{\TwoNorm{q-y}}} \bigg\}
\end{align*}
with $D_J=D \setminus \{D_i,D_j\}$.
For agent~$i$, the guaranteed dominant region can be defined by,
\[
\gW_i = \bigcup_{I \in \II^i} \gV_I
\]
The cell $\gW_i$ represents the region that agent~$i$ is guaranteed to be responsible for covering.
\subsection{k-order dual-guaranteed Voronoi partitions}
\begin{figure*}
\centering%
\subfloat[$r=0$]{\label{fig:dgV-ex-r0}\includegraphics[scale=0.16]{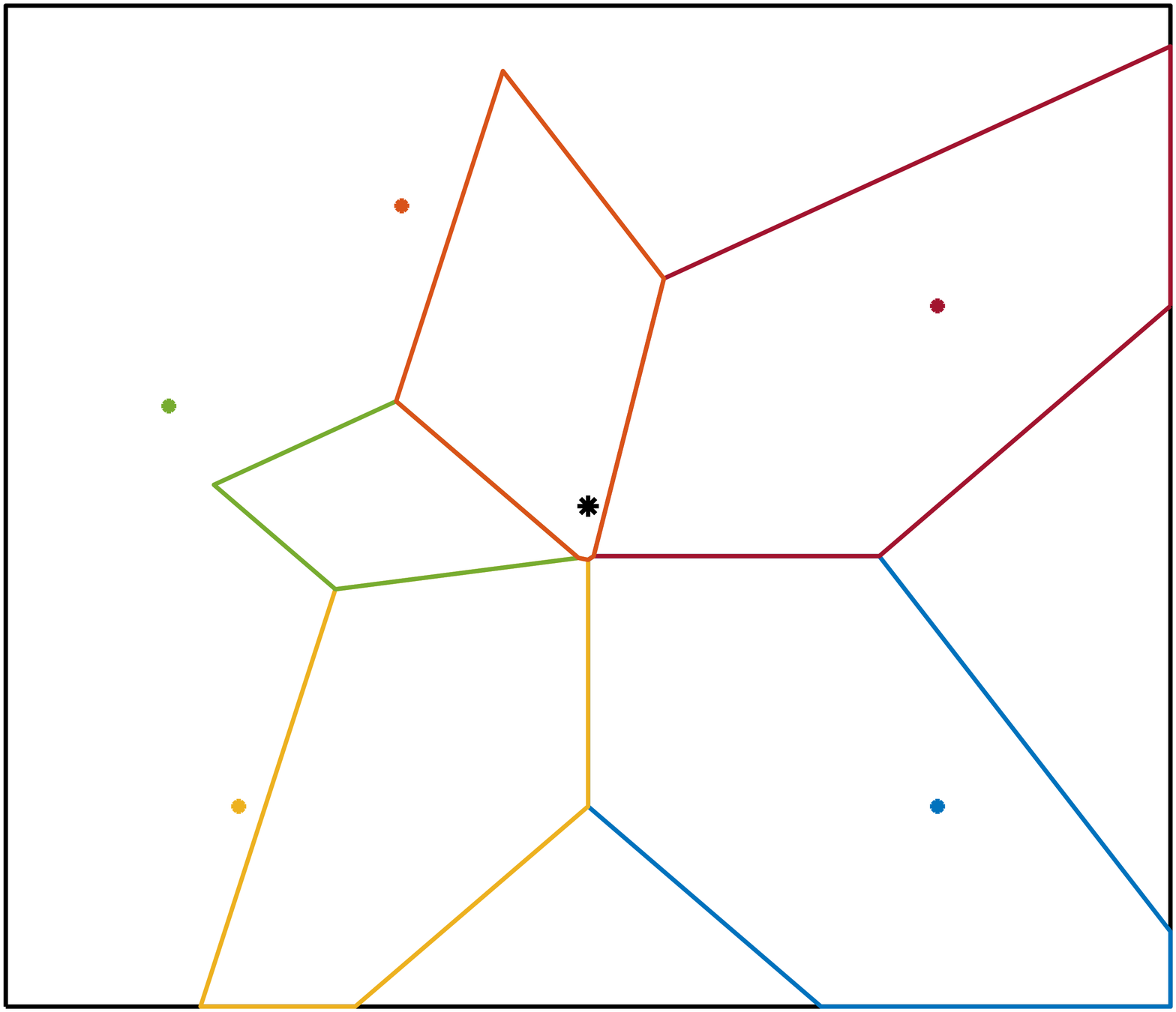}}%
\subfloat[$r=1$]{\label{fig:dgV-ex-r1}\includegraphics[scale=0.16]{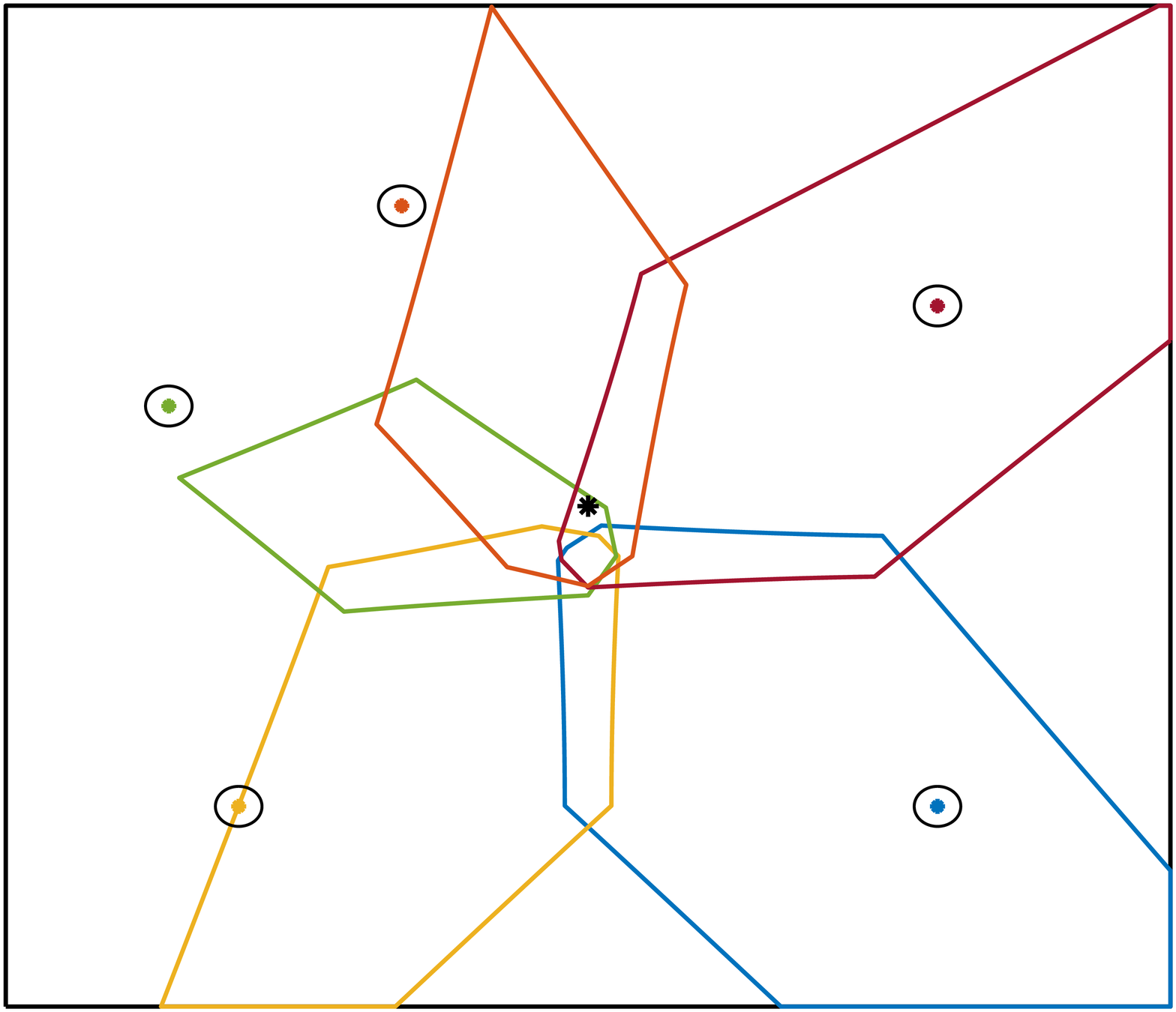}}%
\subfloat[$r=2$]{\label{fig:dgV-ex-r2}\includegraphics[scale=0.16]{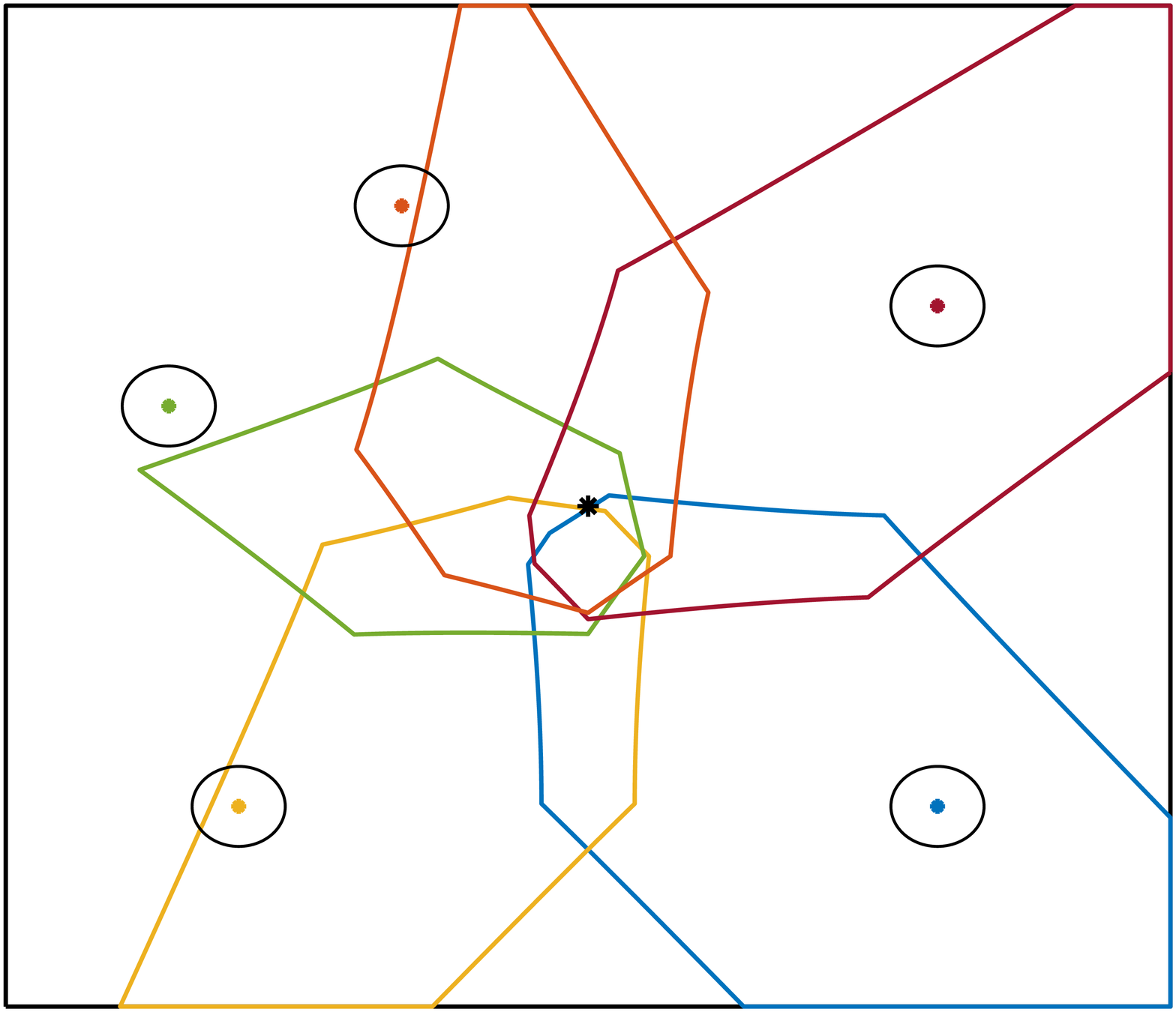}}%
\caption{The dual-guaranteed $k$-order Voronoi cells ($k=2$) for a single agent represented by the black asterisk located close to the center of each diagram. Together, the diagrams illustrate the difference between cells when the radii changes, $r=0$~(a), $r=1$~(b), $r=2$~(c).}
\label{fig:dgV-ex}%
\end{figure*}
In \cite{nowzari2012self}, the concept of dual-guaranteed Voronoi partitions was presented. Here we extend this concept to the case of $k$-order dual-guaranteed Voronoi partitions. Again using $D_I \subset D$ and $D_J = D \setminus D_I$, the $k$-order dual-guaranteed Voronoi cell for agents in $I$ is defined by,
\begin{align*}
\dgV_I = \Big\{ {q \in S}~\Big| 
~\min_{x \in D_i} & {\TwoNorm{q-x}} \leq \max_{y \in D_j}{\TwoNorm{q-y}}, \\
& ~\forall D_i \in D_I, ~\forall D_j \in D_J  \Big\}
\end{align*}
The region outside of the cell $\dgV_I$ represents the points that are guaranteed to be closer to agents in $J$ than to the agents in $I$. 
The dual-guaranteed dominant region associated with agent~$i$ is given by,
\[
\dgW_i = \bigcup_{I \in \II^i} \dgV_I
\]
The region outside of cell $\dgW_i$ represents the points that agent~$i$ is guaranteed not to be responsible for covering.

%


Next, a solution that includes both the design of a motion control law and a communication strategy for the above stated problem is presented.

\section{Self-triggered higher order coverage optimization}\label{se:design}
Given the problem described in Section~\ref{se:statement}, one possible approach would be for agent $i$ to periodically acquire position information from other agents. This would occur at each time step where agent $i$ would 1) acquire new information, 2) Compute it's dominant cell $W_i$, 3) compute the centroid $C_{W_i}$, 4) move towards $C_{W_i}$ at $\vmax$, 5) repeat. However, similar to continuous communication, a periodic method requires frequent communication and a potentially unnecessary computational burden. The method proposed in the following section  attempt to alleviate the communication and computational burden by a two part approach. The first component is a motion control law to determine how agents move when the information they possess is not up to date with respect to the most recent time step. The second component is an information update policy that allows each agent to decide when information from other agents should be acquired. 

\subsection{Motion control}\label{se:motion-control}
If agent~$i$ has access to the exact positions of other agents then agent~$i$ is capable of computing the exact dominant cell $W_i$. Consequently, agent~$i$ can compute the centroid $C_{W_i}$. Once $C_{W_i}$ has been computed, agent~$i$ may simply move towards it. When agent~$i$ does not communicate with the other agents in the network, the exact location of the other agents will be unknown to agent~$i$. Since the exact locations of other agents is unknown, each agent must rely on the data that it does possess as a means for deciding how to move. The data that an agent does possess at any given time step includes the most recent position update that it has received from the other agents and the time that has elapsed since the last update. 

Informally, the motion control law is described by the following. At each time step each agent uses the information that it has stored to compute it's $k$-order guaranteed and dual-guaranteed Voronoi cells. Next, each agent computes it's guaranteed and dual-guaranteed dominant cells. Once the agent has computed these cells, the agent then  computes the centroid for the guaranteed dominant cell $\gW_i$ and begins moving toward it.

The motion control law assumes that each agent has access to the value of the density $\phi$ over it's $k$-order guaranteed dominant cell. The motion control law as describe above does not necessarily guarantee that agents will move closer to the centroids of their dominant cells without applying additional constraints on agent movement. As in ~\cite{nowzari2012self}, the following lemma applies.
\begin{lemma}\label{lm:motionconstraint}
Given $p \neq q,q^* \in \real^2$, let $p' \in [p,q]$ such that $\TwoNorm{p' - q} \geq \TwoNorm{q^* - q}$, then $\TwoNorm{p' - q^*} \leq \TwoNorm{p - q^*}$.
\end{lemma}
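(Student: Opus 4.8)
The plan is to compare the two squared distances $\TwoNorm{p'-q^*}^2$ and $\TwoNorm{p-q^*}^2$ directly via the law of cosines, so I first fix convenient scalars. Write $a = \TwoNorm{p'-q}$, $b = \TwoNorm{p-q}$, and $r = \TwoNorm{q^*-q}$. Because $p' \in \csegment{p}{q}$ lies between $p$ and $q$, moving from $p$ toward $q$ only decreases the distance to $q$, so $a \le b$; combining this with the hypothesis $\TwoNorm{p'-q} \ge \TwoNorm{q^*-q}$ gives the chain $r \le a \le b$ (and in particular $b = \TwoNorm{p-q} > 0$ since $p \neq q$).

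Next I would exploit that $p'$ lies on the ray from $q$ through $p$, so the angle $\theta = \angle q^* q p = \angle q^* q p'$ at the vertex $q$ is common to the two triangles $q q^* p$ and $q q^* p'$. The law of cosines in each triangle then gives $\TwoNorm{p-q^*}^2 = b^2 + r^2 - 2 b r \cos\theta$ and $\TwoNorm{p'-q^*}^2 = a^2 + r^2 - 2 a r \cos\theta$. Subtracting and factoring the difference yields
\[
\TwoNorm{p'-q^*}^2 - \TwoNorm{p-q^*}^2 = (a-b)\,(a + b - 2r\cos\theta).
\]
It then remains only to check the sign of each factor: $a - b \le 0$ by the first step, while $a + b - 2r\cos\theta \ge 2r - 2r = 0$ because $a,b \ge r$ and $\cos\theta \le 1$. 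Hence the product is nonpositive, which is precisely $\TwoNorm{p'-q^*} \le \TwoNorm{p-q^*}$.

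The argument is essentially routine, so I do not expect a serious obstacle; the only points requiring care are the degenerate configurations, and making sure the hypothesis is what delivers $b \ge r$ (so that $a+b-2r\cos\theta \ge 0$ rather than just $a \ge r$). When $q^*=q$ we have $r=0$, the cosine terms drop out, and the claim reduces to $a^2 \le b^2$, i.e. $a \le b$, which already holds; the common-angle identification needs $p'\neq q$, which is guaranteed whenever $r>0$ since then $a \ge r > 0$, and the collinear case $q^* \in \csegment{p}{q}$ is absorbed uniformly by the same computation with $\cos\theta = \pm 1$. An alternative that sidesteps angles is to parametrize $p'(t) = p + t(q-p)$ and note that $t \mapsto \TwoNorm{p'(t)-q^*}^2$ is a convex quadratic, which reduces the claim to the boundary case $a=r$; but the law-of-cosines computation above settles the general case in a single line, so I would present that.
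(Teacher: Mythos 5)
Your proof is correct. The paper itself offers no proof of this lemma --- it simply imports the statement from the cited reference \cite{nowzari2012self} with the remark that the lemma ``applies'' --- so there is no in-paper argument to compare against; your write-up is a valid self-contained substitute. The key computation checks out: with $a=\TwoNorm{p'-q}$, $b=\TwoNorm{p-q}$, $r=\TwoNorm{q^*-q}$ and the common angle $\theta$ at $q$, the factorization $\TwoNorm{p'-q^*}^2-\TwoNorm{p-q^*}^2=(a-b)(a+b-2r\cos\theta)$ is right, $a-b\le 0$ follows from $p'\in\csegment{p}{q}$, and $a+b-2r\cos\theta\ge 0$ follows from $r\le a\le b$ together with $\cos\theta\le 1$ (and is immediate when $\cos\theta<0$). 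Your handling of the degenerate cases is adequate; in fact, since the ``law of cosines'' here is just the expansion $\TwoNorm{u-v}^2=\TwoNorm{u}^2+\TwoNorm{v}^2-2\inn{u}{v}$ with $u=p-q$, $v=q^*-q$, no case split is strictly needed. The convex-quadratic parametrization you mention at the end is the more standard route for this lemma in the deployment literature, but both work.
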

Following lemma~\ref{lm:motionconstraint}, if $p = p_i$ is the position of agent $i$ that is moving toward $p'$ in the direction of the computed goal $q = C_{\gW_i}$ then the distance to $q^{*} = C_{W_i}$ decreases while
\begin{equation}\label{eq:motionconstraint}
\TwoNorm{p' - C_{\gW_i}} \geq \TwoNorm{C_{V_i} - C_{\gW_i}}
\end{equation}
holds. Since $C_{Wi}$ is unknown to agent $i$ the right hand side of ~\ref{eq:motionconstraint} cannot be computed. However the value $\TwoNorm{C_{W_i} - C_{\gW_i}}$ can be bounded such that
\begin{equation} \label{eq:algconstraint}
\TwoNorm{p' - C_{\gW_i}} \geq \text{bnd}_i
\end{equation}
where $\text{bnd}_i$ is given by
\begin{equation} \label{eq:algbound}
\text{bnd}_i = \text{bnd}(\gW_i,\dgW_i) = 2 cr_{\dgW_i} \bigg( 1 - \frac{M_{\gW_i}}{M_{\dgW_i}} \bigg)
\end{equation}
Therefore, agent $i$ moves towards $C_{\gW_i}$ as much as possible in one timestep while maintaining the condition in (\ref{eq:algconstraint}). The motion control law is formally defined in table (\ref{tab:motion}). 

For every consecutive time step that an agent goes without receiving updated information, (\ref{eq:algbound}) increases making the condition of (\ref{eq:algconstraint}) less likely to be achievable. Leading to the condition where the agent can no longer move in a manner that does not increase the distance to $C_{W_i}$. Therefore, a decision mechanism that governs when an agents will acquire new information is required and is discussed next.

\begin{algorithm}[ht]
  {\footnotesize 
	  Agent $i \in \until{n}$ performs:
      \begin{algorithmic}[1]
      \STATE set $D = \mathcal{D}^i$
      \STATE compute $L= \gW_i(D)$ and $U=\dgW_i(D)$
      \STATE compute $q=C_L$ and $r=\text{bnd}(L,U)$
      \STATE set $d = \vmax \timestep$
      \STATE set $p_i' = \tbb(p_i,d, q, r)$
      \STATE move to $p_i'$
      \STATE set $r_j^i = r_j^i + d$
      \STATE set $\mathcal{D}_j^i = (~p_j^i, \min{\{r_j^i, \diam(S)}\}~)$
      \STATE set $\mathcal{D}_i^i = (p_i',0)$
      \end{algorithmic}
  }
  \caption{\hspace*{-.5ex}: \small \algoMotionControl}\label{tab:motion}
\end{algorithm}

\subsection{Update decision policy}\label{se:decision-policy}
The second major aspect of the self-triggered deployment strategy provides a decision mechanism that determines when an agent must perform an information update via communication with other agents. Updates to position information will be necessary for an agent to reduce the level of uncertainty that it has accumulated since the last time an update occurred. As previously mentioned, as time elapses without receiving position information from other agents, the true location of $C_{W_i}$ will be unknown and the set of possible locations for $C_{W_i}$ will continue to increase in size. Based on the motion control law presented in the previous section, agent~$i$ will rely on moving towards $C_{\gW_i}$ so long as condition (\ref{eq:algconstraint}) holds. If it becomes infeasible for agent~$i$ to move due to condition (\ref{eq:algconstraint}) not being satisfied, then agent~$i$ must perform an information update at that moment in time in order to maintain condition (\ref{eq:algconstraint}). Therefore, the update decision policy can be describe as follows. For every timestep, each agent computes their $k$-order guaranteed and dual guaranteed dominant cells, as well as computing the bound (\ref{eq:algbound}). Then each agent decides whether or not to perform a position data  update. Agent~$i$ will decide to perform the update when the bound (\ref{eq:algbound}) becomes greater than or equal to $\TwoNorm{p_i - C_{\gW_i}}$. It is possible that the points $p_i$ and $C_{\gW_i}$ may become close to one another i.e. $\TwoNorm{p_i - C_{\gW_i}} < \epsilon$ for $\epsilon > 0$.In this case, the bound (\ref{eq:algbound}) may not be able to become small enough such that a position update is not required. To handle this condition, the value of $\TwoNorm{p_i - C_{\gW_i}}$ is clamped at $\epsilon$ so that a minimum amount of time will pass before and update will occur. The update policy is described formally in table~\ref{tab:one-update}. 

\begin{algorithm}[htb]
{\footnotesize
  Agent $i \in \until{n}$ performs:
  \begin{algorithmic}[1]
  	\STATE set $D = \mathcal{D}^i$
    \STATE compute $L= \gW_i(D)$ and $U=\dgW_i(D)$
    \STATE compute $q=C_L$ and $r=\text{bnd}(L,U)$
    \IF {$r \geq \max{\{\TwoNorm{q-p_i},\epsilon\}}$} 
    \STATE reset $\mathcal{D}^i$ by performing a position update
    \ENDIF
  \end{algorithmic}}
  \caption{\hspace*{-.5ex}: \small \algoOneStep}\label{tab:one-update}
\end{algorithm}

\subsection{The ~$k$-order self-triggered centroid algorithm}
A self-triggered deployment strategy can be formulated by combining the motion control law defined in Table \ref{tab:motion} and the update decision policy from Table \ref{tab:one-update}. First, it is noted that combining the two algorithms from Table \ref{tab:motion} and Table \ref{tab:one-update} without modification would provide an event-triggered deployment strategy. The event-triggered strategy would be performed on each timestep where agent~$i$ runs the update decision policy followed by running the motion control law. This requires agent~$i$ to compute $L$, $U$, $C_L$, and $r$ from Table \ref{tab:one-update} on every timestep.
However, agent~$i$ is in possession of all the information necessary to predict its motion trajectory up to the time in the future where $r \geq \max{\{\TwoNorm{q-p_i},\epsilon\}}$ occurs. The self-triggered algorithm is presented in Table \ref{tab:n-step}. In addition, note that a trivial update mechanism would provide each agent with up-to-date locations for all other agents in the network i.e. using all information stored in $\DD^i$. However, this is costly from a communications point of view. Instead, a localized algorithm is proposed that limits the number of agents that agent~$i$ must acquire information from. To compute $\gW_i$ and $\dgW_i$, agent~$i$ must have knowledge of only a subset of agent positions. The subset of agents used by agent $i$ can be found by first defining
\[
\ragents^i(q) = \setdef{j \in \ragents}{\TwoNorm{p_j-q} < \TwoNorm{p_i-q},j \neq i}
\]
where $|\ragents^i(q)| \geq k$. 
Based on this definition we can redefine the cell $W_i$ by
\[
W_i = \setdef{q \in S}{(|\ragents^i(q)|) \leq k-1}
\]
 To locally compute $W_i$ at the specific time when step \algostep{4} is executed, the \algoVoronoi is used. This is borrowed from \cite{LuoLXWH12icdcs} and presented in Algorithm~\ref{tab:voronoi}
 
\begin{algorithm}[ht]
  {\footnotesize
    \begin{algorithmic}[1]
      \STATE initialize $\rho = 0$
      \REPEAT 
        \STATE {set $out \gets true$}
      	\STATE {set $\rho \gets \rho + \gamma$}
      	\STATE {set $\NN_i(\rho) \gets \setdef{j}{\TwoNorm{p_j-p_i} < \rho}$}
      	\FORALL {$\setdef{q \in S}{\TwoNorm{q-p_i} = \rho/2}$} 
      	  \STATE {set $\ragents^i(q) \gets \setdef{j \in \NN_i(\rho)}{\TwoNorm{p_j-q} < \TwoNorm{p_i-q},j \neq i}$}
      	  \IF {$|\ragents^i(q)| < k$} 
		    \STATE {set $out \gets false$}
			\STATE {break} 
		  \ENDIF 
        \ENDFOR
      \UNTIL{$out = true$}
      \STATE {compute $W_i$ from $\NN_i(\rho)$}
    \end{algorithmic}}
  \caption{\hspace*{-.5ex}: \small \algoVoronoi}\label{tab:voronoi} 
\end{algorithm}

The \algoVoronoi is based on agent $i$ gradually increasing its communication radius until all the information required to  construct its exact $k$-order Voronoi cell has been obtained. Combining Algorithms~\ref{tab:motion}-\ref{tab:voronoi} leads to the complete \algoFull described in Algorithm~\ref{tab:full}.
\begin{algorithm}[htb]
{\footnotesize
  Agent $i \in \until{n}$ performs:
  \begin{algorithmic}[1]
  	\STATE set $D = \mathcal{D}^i$
    \STATE compute $L= \gW_i(D)$ and $U=\dgW_i(D)$
    \STATE compute $q=C_L$ and $r=\text{bnd}(L,U)$
    \IF {$r \geq \max{\{\TwoNorm{q-p_i},\epsilon\}}$} 
    	\STATE reset $\mathcal{D}^i$ by performing a position update
    \ELSE
    \STATE initialize $t_{sleep} = 0$
    \WHILE {$r < \max{\{\TwoNorm{q-p_i},\epsilon\}}$}
    	\STATE set $t_{sleep} = t_{sleep} + 1$
      	\STATE set $d = \vmax \timestep$
      	\STATE set $p_i' = \tbb(p_i,d, q, r)$
      	\STATE move to $p_i'$
      	\STATE set $r_j^i = r_j^i + d$
      	\STATE set $\mathcal{D}_j^i = (~p_j^i, \min{\{r_j^i, \diam(S)}\}~)$
      	\STATE set $\mathcal{D}_i^i = (p_i',0)$
      	\STATE set $D = \mathcal{D}^i$
    	\STATE compute $L= \gW_i(D)$ and $U=\dgW_i(D)$
    	\STATE compute $q=C_L$ and $r=\text{bnd}(L,U)$
    \ENDWHILE
    \STATE wait for $t_{sleep}$ timesteps
    \STATE repeat
    \ENDIF
  \end{algorithmic}}
  \caption{\hspace*{-.5ex}: \small \algoMultiStep}\label{tab:n-step}
\end{algorithm}

\begin{algorithm}[ht]
{\footnotesize
  \begin{flushleft}
    Initialization
  \end{flushleft}
  \begin{algorithmic}[1]
    \STATE execute \algoVoronoi
  \end{algorithmic}
  \begin{flushleft}
    At time step $\ell \in \integernonnegative$, agent $i \in \until{n}$ performs:  
  \end{flushleft}
  \begin{algorithmic}[1]
  	\STATE set $D = \pi({\DD}^i)$
    \STATE compute $L= \gW_i(D)$ and $U=\dgW_i(D)$
    \STATE compute $q=C_L$ and $r=\text{bnd}(L,U)$
    \IF {$r \geq \max{\{\TwoNorm{q-p_i},\epsilon\}}$} 
    	\STATE reset $\mathcal{D}^i$ by running \algoVoronoi
    	\STATE set $D = \pi ({D}^i)$
    	\STATE compute $L= \gW_i(D)$ and $U=\dgW_i(D)$
    	\STATE compute $q=C_L$ and $r=\text{bnd}(L,U)$
    \ENDIF
    \STATE set $d = \vmax \timestep$
    \STATE set $p_i' = \tbb(p_i,d, q, r)$
    \STATE move to $p_i'$
    \STATE set $r_j^i = r_j^i + d$
    \STATE set $\mathcal{D}_j^i = (~p_j^i, \min{\{r_j^i, \diam(S)}\}~)$
    \STATE set $\mathcal{D}_i^i = (p_i',0)$
  \end{algorithmic}}
\caption{\hspace*{-.5ex}: \small \algoFull}\label{tab:full}
\end{algorithm}

\begin{figure*}
\centering%
\subfloat[Initial configuration]{\includegraphics[scale=0.16]{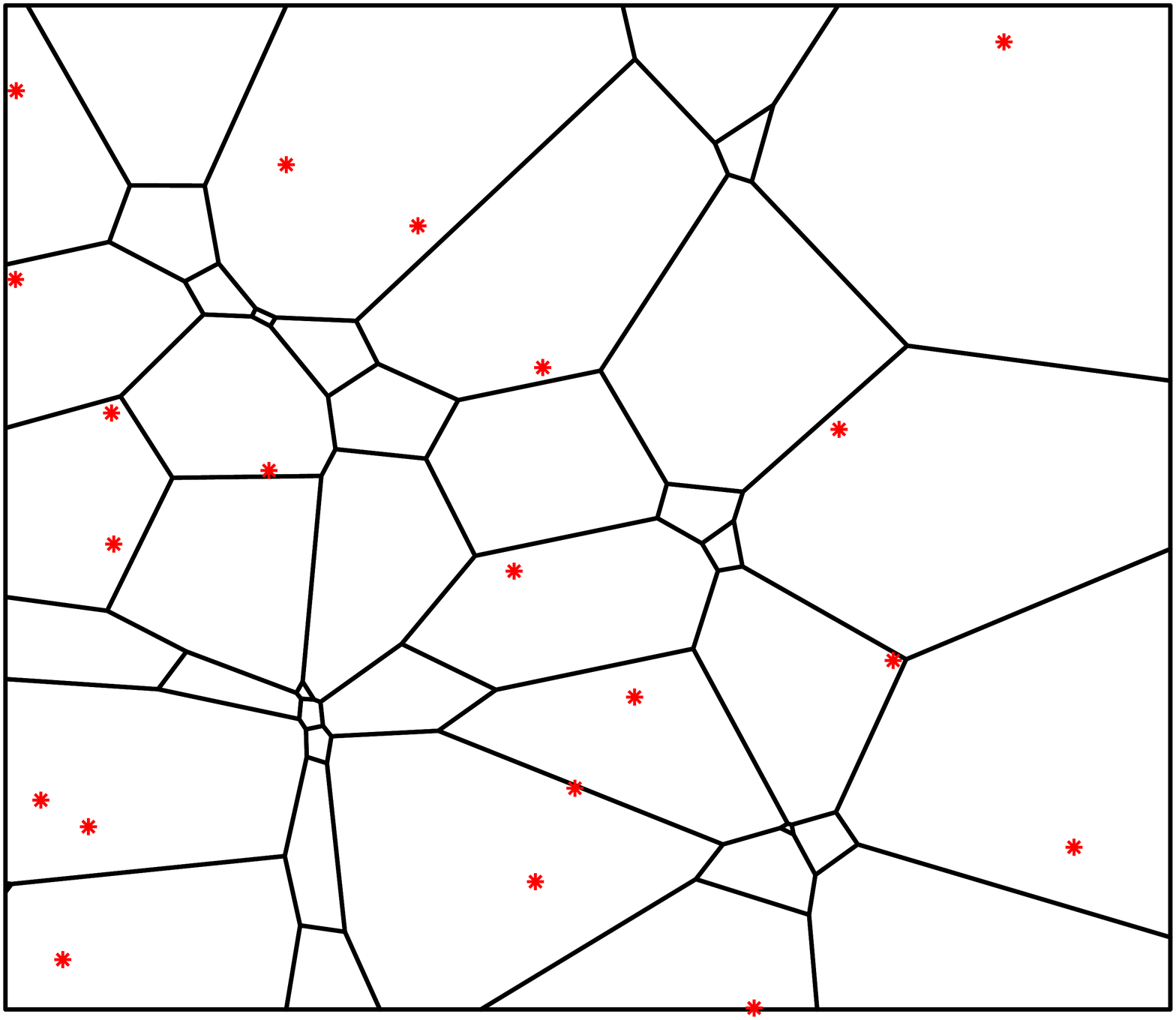} \label{fig:initial-20}}%
\subfloat[Trajectories]{\includegraphics[scale=0.16]{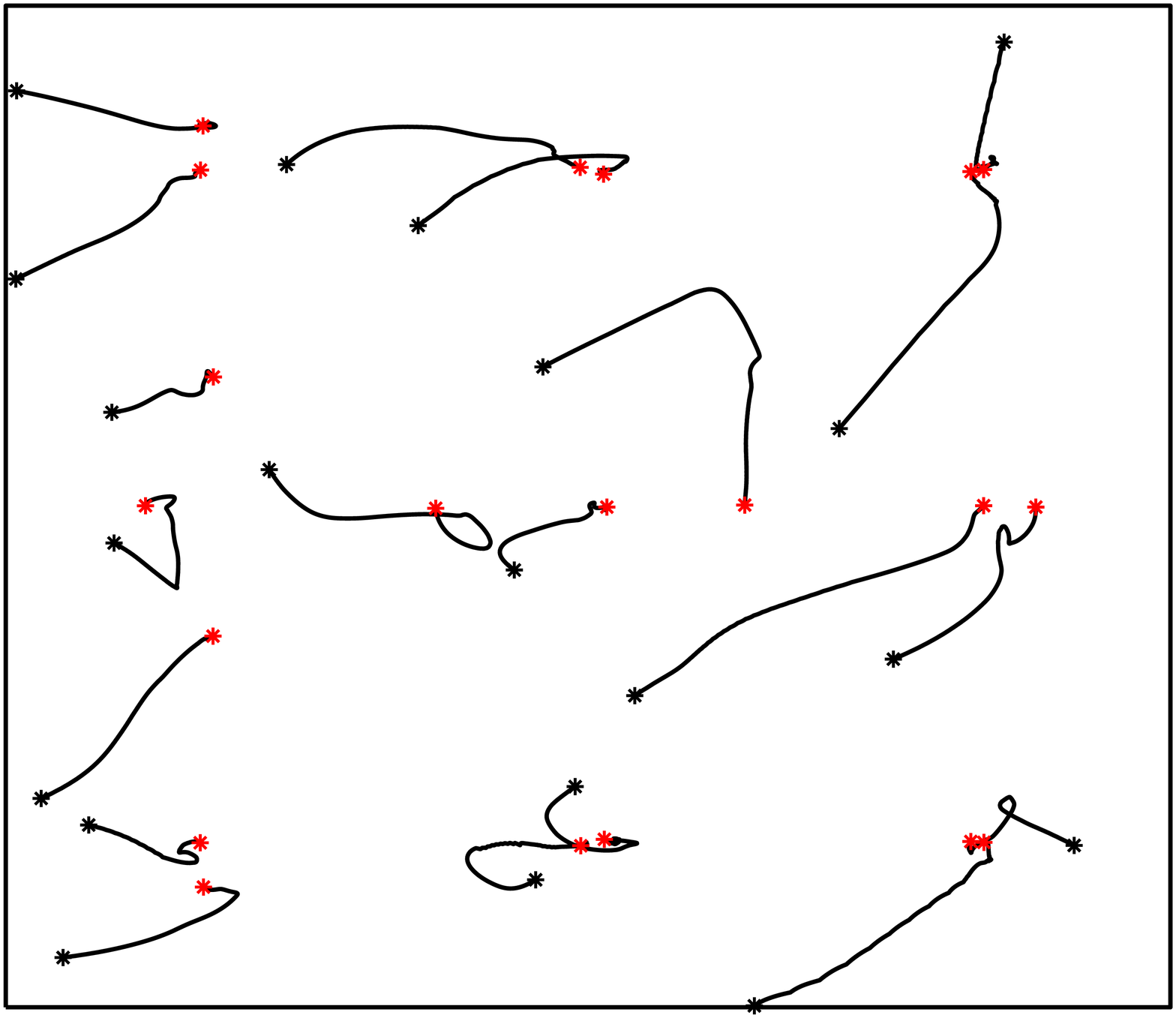} \label{fig:traj-20}}%
\subfloat[Final configuration]{\includegraphics[scale=0.16]{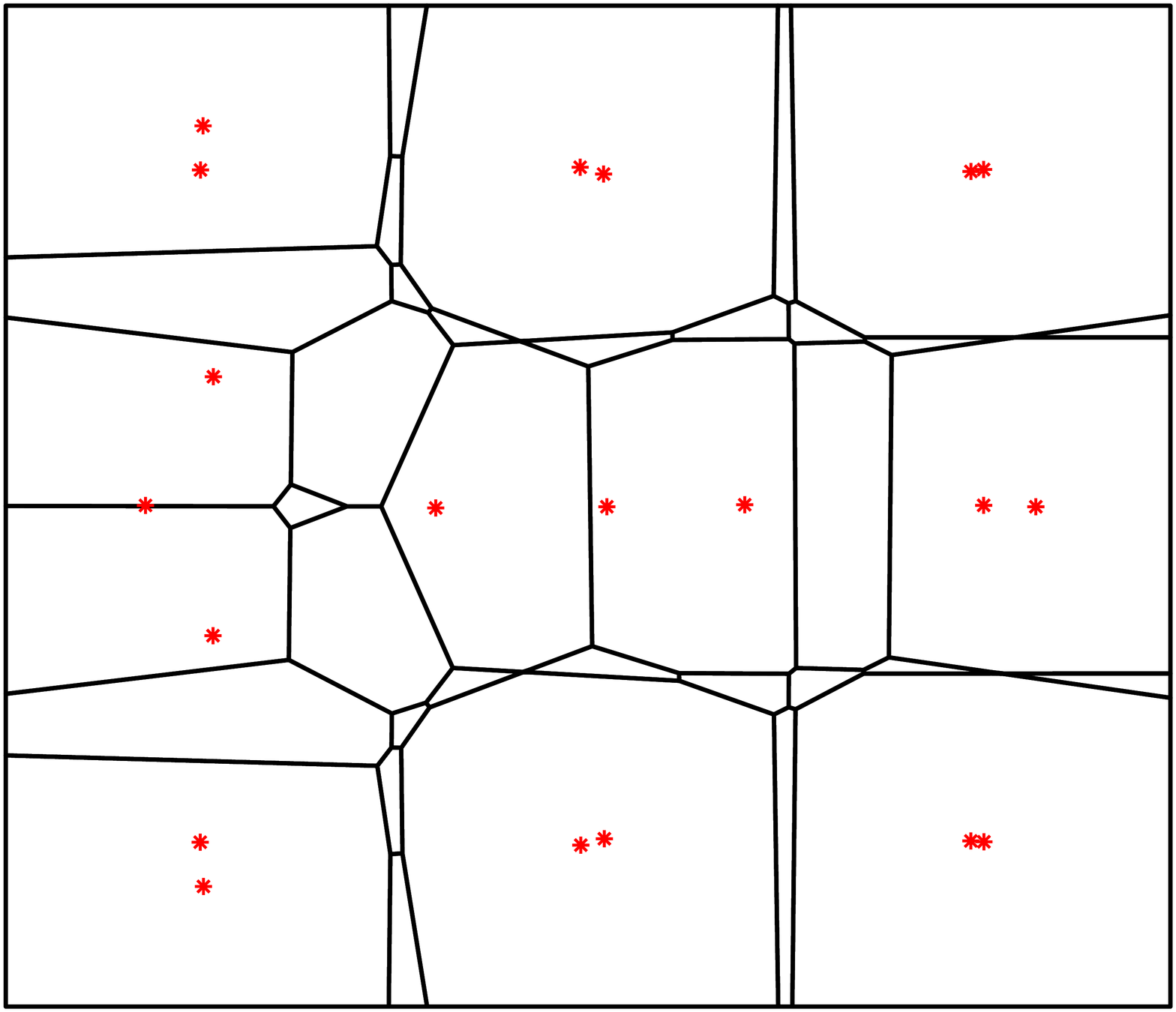} \label{fig:final-20}}%
\caption{Initial configuration (a), trajectories (b), and final configuration (c) for 20 agents running the \algoFull}%
\label{fig:sim-20}%
\end{figure*}
\section{Convergence analysis}\label{se:convergence}
A detailed analysis is provided in this section to demonstrate that agents following the motion and information update strategies presented thus far will generate a network configuration such that all agents converge to their centroidal positions. The asynchronous timing of information exchange that occurs during the network evolution is dependent on the the number of agents in the network, the area of the task space, and the initial agent configuration. This presents challenges when attempting to analyze the convergent properties in a similar fashion to that of the continuous-time continuous-update policy. Instead, our analysis assumes that agents move according to the motion control law given in Table (\ref{tab:motion}) while considering information updates that occur randomly in time. We show that regardless of how agents share information, trajectories governed by the motion control law, in particular the constraints laid out by $\tbb$, will at least converge to a positively invariant set and that if the update decision policy is followed, the network will converge to the centroidal configuration. To achieve this, a set-valued map $T$ is defined that describes the evolution of the network state represented by the data storage of all agents. Then by applying the LaSalle Invariance Principal for set-valued maps, it is shown that all trajectories generated under the state evolution map $T$ provide values of the performance function $\HH$ that are monotonically non-increasing. It is also shown that there exist under $T$ a weakly positively invariant set that is specifically contained in the trajectories that follow the update decision policy of Table (\ref{tab:one-update}). Finally, we deduced that this set coincides with the centroidal network configuration in task space. This is proposed formally by the following:
\begin{proposition} \label{prop:main}
For $\epsilon \in [0 \diam(S)]$, the agent position evolving under the self-triggered deployment algorithm from any initial network configuration in $S$ converges to the $k$-order Voronoi centroidal configuration.
\end{proposition}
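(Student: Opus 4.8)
The plan is to use the coverage cost $\HH$ from \eqref{eq:HH_VVk} as a common Lyapunov function for the whole network and to invoke the LaSalle Invariance Principle for set-valued maps, as anticipated in the discussion preceding the statement. First I would make the state space precise. The network state is the stacked data storage $\DD = (\DD^1,\dots,\DD^n) \in \dataspace$, recording every agent's stored neighbor positions together with the uncertainty radii. Since information updates may be triggered on any subset of agents at any timestep, the one-step transition induced by Algorithm~\ref{tab:full} is genuinely set-valued, $\svmap{\Algo}{\dataspace}{\dataspace}$, with $\Algo(\DD)$ collecting every successor compatible with some admissible choice of which agents reset. Writing $P = \loc{\DD}$ for the recovered physical positions and abbreviating $\HH(\DD) := \HH(\loc{\DD})$, the goal becomes: every trajectory of $\Algo$ has its position component converge to $\setdef{\DD}{p_i = \cm{W_i} \text{ for all } i \in \until{n}}$.

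The core step is monotonicity: $\HH(\DD') \le \HH(\DD)$ for every $\DD' \in \Algo(\DD)$. The information bookkeeping in Algorithm~\ref{tab:full} only rewrites stored radii and never the true positions, so $\HH$ is inert under the communication part of the map and I only need to control the motion part. Each agent $i$ steps from $p_i$ toward $q = \cm{\gW_i}$ through $\tbb$, landing either at distance $\vmax\timestep$ or on $\partial\cball{q}{r}$ with $r = \bound_i$; either way the constraint \eqref{eq:algconstraint}, $\TwoNorm{p_i' - \cm{\gW_i}} \ge \bound_i$, holds and $p_i' \in [p_i, \cm{\gW_i}]$. Because $\bound_i$ in \eqref{eq:algbound} is designed to dominate $\TwoNorm{\cm{W_i} - \cm{\gW_i}}$, Lemma~\ref{lm:motionconstraint} applied with $p = p_i$, $q = \cm{\gW_i}$, $q^* = \cm{W_i}$ yields $\TwoNorm{p_i' - \cm{W_i}} \le \TwoNorm{p_i - \cm{W_i}}$: no agent moves farther from its \emph{true} dominant centroid. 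Holding the partition fixed at $\VV(P)$, the second part of Lemma~\ref{lm:H-non-incr} gives $\HH(P',\VV(P)) \le \HH(P,\VV(P))$, and the first part gives $\HH(P',\VV(P')) \le \HH(P',\VV(P))$; chaining these with $\HH(P) = \HH(P,\VV(P))$ establishes monotonicity along $\Algo$.

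With monotonicity established I would verify the standing hypotheses of the set-valued LaSalle principle: trajectories remain in the compact set obtained by clamping all radii at $\diam(S)$ (as Algorithm~\ref{tab:full} does), $\HH$ is continuous and bounded below, and $\Algo$ has closed graph. LaSalle then forces convergence to the largest weakly positively invariant set $M$ contained in $\setdef{\DD}{\HH(\DD') = \HH(\DD) \text{ for some } \DD' \in \Algo(\DD)}$. The decisive step is to show $M$ coincides with the centroidal configuration. On $M$ the cost cannot strictly decrease, so the chain of inequalities above must be tight, forcing $\TwoNorm{p_i' - \cm{W_i}} = \TwoNorm{p_i - \cm{W_i}}$ for each $i$. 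The subtlety is that an agent may be stationary not because $p_i = \cm{W_i}$ but because accumulated uncertainty has made \eqref{eq:algconstraint} infeasible. Here the update policy is indispensable: whenever $\bound_i \ge \max\{\TwoNorm{q - p_i}, \eps\}$ the agent resets $\DD^i$, sending its radii to zero so that $\gW_i = W_i$, $\cm{\gW_i} = \cm{W_i}$, and $\bound_i = 0$, after which it can strictly reduce its distance to $\cm{W_i}$ unless it already sits there. Weak invariance of $M$ permits me to select exactly the branch of $\Algo$ that performs these resets, so no off-centroid stalled state survives in $M$; this pins $M$ to $\setdef{\DD}{p_i = \cm{W_i}, \forall i}$, which is the $k$-order Voronoi centroidal configuration.

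The hardest part is this final identification of $M$ in the set-valued, asynchronous setting, and it rests on two estimates I would need to nail down. First, the bound \eqref{eq:algbound} must genuinely dominate $\TwoNorm{\cm{W_i} - \cm{\gW_i}}$ for every reachable uncertainty level; this is a geometric comparison of the centroids of the nested cells $\gW_i \subseteq W_i \subseteq \dgW_i$ in terms of the circumradius of $\dgW_i$ and the mass ratio $\mass{\gW_i}/\mass{\dgW_i}$, and the entire monotonicity argument collapses if it fails. Second, since the LaSalle set is only \emph{weakly} invariant, I must exhibit at least one admissible evolution out of each point of $M$ that keeps $\HH$ constant and then show every such evolution drives the agents onto their centroids; the most delicate case is the $\eps$-clamped regime $\TwoNorm{p_i - \cm{\gW_i}} < \eps$, where neither further motion nor a forced update is triggered, and handling it is precisely why the threshold $\eps$ is built into the trigger condition. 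These two points, rather than the LaSalle machinery itself, are where the real work lies.
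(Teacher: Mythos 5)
Your overall architecture matches the paper's: $\HH$ as the common Lyapunov function on the stacked data space $\dataspace$, a closed set-valued evolution map, monotonicity of $\HH$ obtained by fixing the partition, applying Lemma~\ref{lm:motionconstraint} together with the bound \eqref{eq:algbound}, and then re-optimizing the partition via Lemma~\ref{lm:H-non-incr}, followed by the set-valued LaSalle principle. That portion of your argument is essentially the paper's Lemma~\ref{lm:mono-non-inc} and Lemma~\ref{lm:weak-pos-invar}, and it is sound.

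The gap is in your identification of the invariant set $M$. You write that ``weak invariance of $M$ permits me to select exactly the branch of $\Algo$ that performs these resets.'' It does not: weak positive invariance only guarantees that \emph{some} successor of each point of $M$ remains in $M$, and that successor may be the one in which agent~$i$ receives no update, so you cannot force the reset branch and conclude $\bound_i = 0$. This non-updating branch is precisely the case the paper must work to discard, and its mechanism is the one piece your proposal is missing. The paper first shows that on $\Omega(\gamma')$ every agent is stalled, i.e.\ $\TwoNorm{p_i^i - C_{\gW_i}} \le \bound_i$ (otherwise $\HH$ would strictly decrease, contradicting $\Omega(\gamma') \subset \HH^{-1}(c)$), and that the trigger inequality $\bound_i < \max\{\TwoNorm{p_i^i - C_{\gW_i}}, \epsilon\}$, enforced along the trajectory by the update policy, passes to $\Omega(\gamma')$ by continuity. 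It then follows a chain $\tilde{\DD}, \tilde{\DD}_1, \tilde{\DD}_2, \dots$ of in-$M$ successors: if agent~$i$ ever updates along the chain, $\bound_i = 0$ and the stall inequality forces $p_i^i = C_{\gW_i} = C_{W_i}$; if it never updates, positions are frozen while the radii grow, so $\bound_i$ increases monotonically toward $\diam(S)$ while $\TwoNorm{p_i^i - C_{\gW_i}}$ eventually falls below $\epsilon$, at which point the trigger inequality caps $\bound_i < \epsilon$ --- contradicting $\bound_i \to \diam(S)$. You correctly flag the $\epsilon$-clamped regime as the delicate case, but you leave it as an open estimate rather than supplying this contradiction, and the ``choose the reset branch'' shortcut offered in its place is not a valid use of weak invariance. (Your other flagged concern --- that \eqref{eq:algbound} genuinely dominates $\TwoNorm{C_{W_i} - C_{\gW_i}}$ --- is asserted rather than proved in the paper as well, so you are not behind on that point.)
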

To proceed, we formally define $\DD = (\DD^1,\dots,\DD^n) \in (S \times \realnonnegative)^{n^2}$ as the sate of an $n$~agent network where $\DD^i = \big((p_1^i,r_1^i),\dots,(p_n^i,r_n^i)\big)$ is the state of agent~$i$. We define $\map{\MM}{(S \times \realnonnegative)^{n^2}}{(S \times \realnonnegative)^{n^2}}$ as the map that updates both the motion $p_i^i$ and uncertainty evolution $r_j^i$ in $\DD$. Recall that the magnitude of $r_j^i$ increases over time when information updates do not occur. We define $\map{f_u}{(S \times \realnonnegative)^{n^2}}{(S \times \realnonnegative)^{n^2}}$ as a mapping of the network state into itself and it describes the information update evolution when the update decision policy from Table (\ref{tab:one-update}) is followed. Note that the self-triggered algorithm can be described as the composition $f_{st} = \MM \circ f_u = \MM(f_u(\DD))$. Let $\svmap{\UU}{(S \times \realnonnegative)^{n^2}}{(S \times \realnonnegative)^{n^2}}$ be the set-valued map that represents any possible information-update evolution. For $\DD'\in \UU(\DD)$, the $i$th component of $\DD'$ is described by,
\[
  \DD^i = \begin{cases}
           \big((p_1^i,r_1^i),\dots,(p_n^i,r_n^i)\big), & \text{no update} \\
           \big((p_1^i,0),\dots,(p_n^i,0)\big), & \text{update occurred} 
         \end{cases}
\]
Note that $f_u(\DD) \in (S \times \realnonnegative)^{n^2}$ is an element of the domain, but $\UU(\DD) \subset (S \times \realnonnegative)^{n^2}$ is a subset of the domain and further, $f_u(\DD) \in \UU(\DD)$ is one outcome in $\UU(\DD)$.

Given the definition of $\MM$ and $\UU$, the full state evolution  is defined by the set-valued map $\svmap{T}{(S \times \realnonnegative)^{n^2}}{(S \times \realnonnegative)^{n^2}}$ where $T = \UU \circ \MM$. Since $\UU$ is closed and $\MM$ is continuous, the evolution map $T$ is closed. For a trajectory $\gamma = \{\DD(t_\ell)\}_{t \in \integerpositive}$ generated by the self-triggered algorithm and $\gamma' = \{\DD'(t_\ell)\}_{t \in \integerpositive}$ given by $\DD'(t_\ell) = f_u(\DD(t_\ell))$ then,
\begin{equation} \label{eq:T-traj}
\DD'(t_{\ell + 1}) = T(\DD'(t_\ell))
\end{equation}

Let $\map{\text{loc}}{(S \times \realnonnegative)^n}{S^n}$ be a map that extracts the positions $P=(p_1^1, \dots, p_n^n))$ from $\DD$ such that $\HH(\loc{\DD}) = \HH(P)$.

\begin{lemma} \label{lm:mono-non-inc}
$\map{\HH}{(S \times \realnonnegative)^{n^2}}{\real}$ is monotonically non-increasing along the trajectories of $T$.
\end{lemma}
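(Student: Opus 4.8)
The plan is to exploit the fact that $\HH$ factors through the position-extraction map, so that only the motion part of $T$ can affect its value, and then to reduce the effect of that motion to the two monotonicity statements already recorded in Lemma~\ref{lm:H-non-incr}. First I would observe that by construction $\HH(\DD) = \HH(\loc{\DD}) = \HH(P)$ depends on the network state $\DD$ only through the agent positions $P$, and not through the uncertainty radii $r_j^i$. Writing $T = \UU \circ \MM$, the set-valued update $\UU$ acts only by (possibly) resetting some radii $r_j^i$ to zero while leaving every position $p_j^i$ unchanged; hence the extracted positions, and therefore $\HH$, are invariant under $\UU$, for every selection $\DD'' \in T(\DD) = \UU(\MM(\DD))$. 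It thus suffices to prove that $\HH$ does not increase under the motion map $\MM$, i.e. that $\HH(\loc{\MM(\DD)}) \leq \HH(\loc{\DD})$.

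Next I would analyze one motion step. For each agent $i$ the law in Table~\ref{tab:motion} sets $p_i' = \tbb(p_i, \vmax \timestep, C_{\gW_i}, \text{bnd}_i)$. From the definition of the to-ball-boundary map, $p_i'$ always lies on the segment $\csegment{p_i}{C_{\gW_i}}$ and satisfies $\TwoNorm{p_i' - C_{\gW_i}} \geq \text{bnd}_i$ (when $p_i$ already lies inside $\cball{C_{\gW_i}}{\text{bnd}_i}$ the map returns $p_i$ itself, so the agent does not move and the inequality below is trivial). Invoking the bound $\TwoNorm{C_{W_i} - C_{\gW_i}} \leq \text{bnd}_i$ underlying \eqref{eq:algbound}, the hypotheses of Lemma~\ref{lm:motionconstraint} hold with $p = p_i$, $q = C_{\gW_i}$, $q^* = C_{W_i}$ and $p' = p_i'$. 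The conclusion of that lemma then yields $\TwoNorm{p_i' - C_{W_i}} \leq \TwoNorm{p_i - C_{W_i}}$ for every $i$; that is, although each agent steers toward the \emph{known} guaranteed centroid $C_{\gW_i}$, it provably never increases its distance to the \emph{unknown} true centroid $C_{W_i}$.

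Finally I would assemble these distance inequalities with Lemma~\ref{lm:H-non-incr}. Holding the partition fixed at the $k$-order Voronoi partition $\VV(P)$ of the pre-motion configuration, the dominant-cell centroids computed against this partition are exactly the $C_{W_i}$ appearing above. Since each $p_i'$ is no farther from $C_{W_i}$ than $p_i$ is, the second assertion of Lemma~\ref{lm:H-non-incr} gives $\HH(P', \VV(P)) \leq \HH(P, \VV(P))$, while its first assertion (optimality of the Voronoi partition) gives $\HH(P') = \HH(P', \VV(P')) \leq \HH(P', \VV(P))$. Chaining these, $\HH(P') \leq \HH(P', \VV(P)) \leq \HH(P, \VV(P)) = \HH(P)$, so $\HH$ does not increase under $\MM$; combined with its invariance under $\UU$ this proves monotonic non-increase along every trajectory of $T$.

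The step I expect to be the main obstacle is the middle one: certifying that the motion, which is designed and constrained entirely in terms of the locally computable guaranteed quantities $C_{\gW_i}$ and $\text{bnd}_i$, actually forces monotone approach to the non-local, uncomputable true centroid $C_{W_i}$. This rests on the validity of the bound $\TwoNorm{C_{W_i} - C_{\gW_i}} \leq \text{bnd}_i$ and on the precise geometry of $\tbb$ placing $p_i'$ on $\csegment{p_i}{C_{\gW_i}}$ outside the ball of radius $\text{bnd}_i$; care is also needed in the degenerate cases where $p_i$ starts inside that ball (no motion, handled trivially) and where the $\epsilon$-clamp of the update policy is active, so that the monotonicity argument never relies on a step the motion law does not actually take.
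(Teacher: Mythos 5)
Your proof is correct and follows essentially the same route as the paper's: fix the $k$-order Voronoi partition $\VV(P)$, use the geometry of $\tbb$ together with Lemma~\ref{lm:motionconstraint} and the bound \eqref{eq:algbound} to show each agent's distance to $C_{W_i}$ does not increase, then chain both assertions of Lemma~\ref{lm:H-non-incr}. The only difference is that you spell out two steps the paper leaves implicit (that $\UU$ leaves positions, hence $\HH$, unchanged, and the final chaining of inequalities), which is a welcome clarification rather than a departure.
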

\begin{proof}
Let $\DD \in \dataspace$ and $\DD' \in T(\DD)$. Let $P =\loc{\DD}$ and $P'=\loc{\DD'}=\loc{\MM(\DD))}$. To demonstrate that $\HH(P') \leq H(P)$, first the $k$-order partition $\VV(P)$ is fixed. Then for each $i \in \ragents$, if  the condition $\TwoNorm{p'_i-C_{\gW_i}} \leq bnd(\DD^i)$ is true then $p'_i = p_i$. This is due to the fact that agent $i$ strictly follows the definition of $\tbb$. If instead $\TwoNorm{p'_i-C_{\gW_i}} > bnd(\DD^i)$ then it is true that $\TwoNorm{p'_i - C_{W_i}} < \TwoNorm{p_i - C_{W_i}}$ by lemma \ref{lm:motionconstraint} and (\ref{eq:algbound}). For both cases, $\HH(P',\VV(p)) \leq \HH(P,\VV(P))$ and furthermore, from lemma \ref{lm:H-non-incr}, $\HH(P',\VV(P')) \leq \HH(P',\VV(P))$
\end{proof}

\begin{lemma} \label{lm:weak-pos-invar}
Let $\gamma'$ be a trajectory of (\ref{eq:T-traj}). Then the $\omega$-limit set $\Omega(\gamma') \subset \dataspace$ with $\Omega(\gamma') \neq \emptyset$ belongs to $\HH^{-1}(c)$ for some constant $c \in \realnonnegative$ and is weakly positively invariant. Let $\gamma'$ be a trajectory of (\ref{eq:T-traj}).
\end{lemma}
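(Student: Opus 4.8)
The plan is to recognize the statement as a direct instance of the LaSalle Invariance Principle for discrete-time set-valued maps. The hypotheses I must verify are: (i) the trajectory $\gamma'$ evolves within a compact, strongly positively invariant set; (ii) $\HH$ is continuous on that set and non-increasing along $T$; and (iii) $T$ is closed. Once (i)--(iii) are in place, nonemptiness of $\Omega(\gamma')$, the inclusion $\Omega(\gamma') \subset \HH^{-1}(c)$, and weak positive invariance are exactly the standard conclusions of the principle.

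First I would exhibit the compact invariant set. The ambient state space $\dataspace$ is not compact because the uncertainty radii are a priori unbounded; however, the state update in Table~\ref{tab:motion} always stores the clamped radius $\min\{r_j^i,\diam(S)\}$. Hence every trajectory of $T$ in fact evolves in $\Sigma := (S \times [0,\diam(S)])^{n^2}$, which is compact because $S$ is a compact convex polygon. Strong positive invariance $T(\Sigma) \subset \Sigma$ holds because clamping keeps all radii in $[0,\diam(S)]$, while the update $p_i' = \tbb(p_i,d,q,r)$ keeps positions in $S$: the goal $q = C_{\gW_i}$ lies in $S$ and $\tbb(p_i,d,q,r)$ lies on the segment $[p_i,q] \subset S$ by convexity.

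Next I would dispatch the remaining hypotheses. The map $\DD \mapsto \HH(\loc{\DD})$ is continuous on $\Sigma$, since $\loc{\DD}$ depends continuously on $\DD$ and $\HH(P)$ is an integral of a continuous integrand over the $k$-order Voronoi cells, whose boundaries vary continuously with $P$ on a convex domain. Monotone non-increase of $\HH$ along $T$ is precisely Lemma~\ref{lm:mono-non-inc}, and closedness of $T = \UU \circ \MM$ has already been recorded. Invoking the principle: since $\gamma'$ lies in the compact set $\Sigma$, the limit set $\Omega(\gamma')$ is nonempty; since $\HH \ge 0$ is non-increasing along $\gamma'$, the values $\HH(\DD'(t_\ell))$ converge to some $c \in \realnonnegative$, and the usual continuity argument forces $\HH \equiv c$ on $\Omega(\gamma')$, giving $\Omega(\gamma') \subset \HH^{-1}(c)$; finally, weak positive invariance follows from closedness of $T$ by selecting, for each $\DD^\ast \in \Omega(\gamma')$, a successor in $T(\DD^\ast) \cap \Omega(\gamma')$ through a subsequence argument that uses the compactness of $\Sigma$.

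The step I expect to be the main obstacle is establishing continuity of $\HH$ on \emph{all} of $\Sigma$: one must rule out jumps in the Voronoi integrals at degenerate configurations, such as coincident sites, cells collapsing to zero measure, or ties in the defining inequalities of $V_I$. Verifying that these boundary cases do not destroy continuity is the delicate point, whereas compactness, the non-increase property, and closedness are either immediate from the algorithm's clamping and the convexity of $S$ or already supplied by earlier results.
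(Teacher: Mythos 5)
Your proposal is correct and follows essentially the same route as the paper: the paper's proof is exactly the unpacked LaSalle argument you describe (boundedness of $\gamma'$ gives $\Omega(\gamma')\neq\emptyset$, closedness of $T$ plus a subsequence extraction gives weak positive invariance, and continuity of $\HH$ together with Lemma~\ref{lm:mono-non-inc} gives $\Omega(\gamma')\subset\HH^{-1}(c)$). You are in fact slightly more careful than the paper, which simply asserts boundedness of the trajectory without noting, as you do, that the radius clamping to $\diam(S)$ and the convexity of $S$ confine the dynamics to the compact set $(S\times[0,\diam(S)])^{n^2}$.
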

\begin{proof}
Let $\gamma'$ be a trajectory of (\ref{eq:T-traj}). First, note that $\gamma'$ being bounded implies $\Omega(\gamma') \neq \emptyset$ and for $\DD' \in \Omega(\gamma')$ there exists a converging sub-sequence $\setdef{\DD'(t_{\ell_m}}{m \in \integernonnegative}$ of $\gamma'$ such that $\DD'(t_{\ell_m}) \rightarrow \DD'$ as $m \rightarrow \infty$. In addition, the sequence $\setdef{\DD'(t_{\ell_{m}+1})}{m \in \integernonnegative}$ is also bounded and has a converging sub-sequence where for $\hat{\DD}'$ the sequence $\DD'(t_{\ell_{m+1}} \rightarrow \hat{\DD}'$ for $m \rightarrow \infty$. Since by definition $\hat{\DD}' \in \Omega(\gamma')$ and $T$ is closed, this implies $\Omega(\gamma')$ is weakly positive invariant. Since $\gamma$ is bounded and $\HH$ is non-increasing along $\gamma$ for all of $\dataspace$, the sequence $\HH \circ \gamma = \setdef{\HH(\gamma(l))}{l \in \integernonnegative}$ is decreasing and bounded from below and therefore convergent. Since for any $z \in \Omega(\gamma)$ there is a converging subsequence $\gamma(\ell_m)$ in $\Omega(\gamma)$ that converges to $z$ and since $\HH$ is continuous, $\HH(\gamma(\ell_m)) \rightarrow \HH(z)=c$ as $m \rightarrow \infty$ where $c \in \real$ is a constant.
\end{proof}
\begin{figure*}
\centering%
\subfloat[Initial configuration]{\includegraphics[scale=0.16]{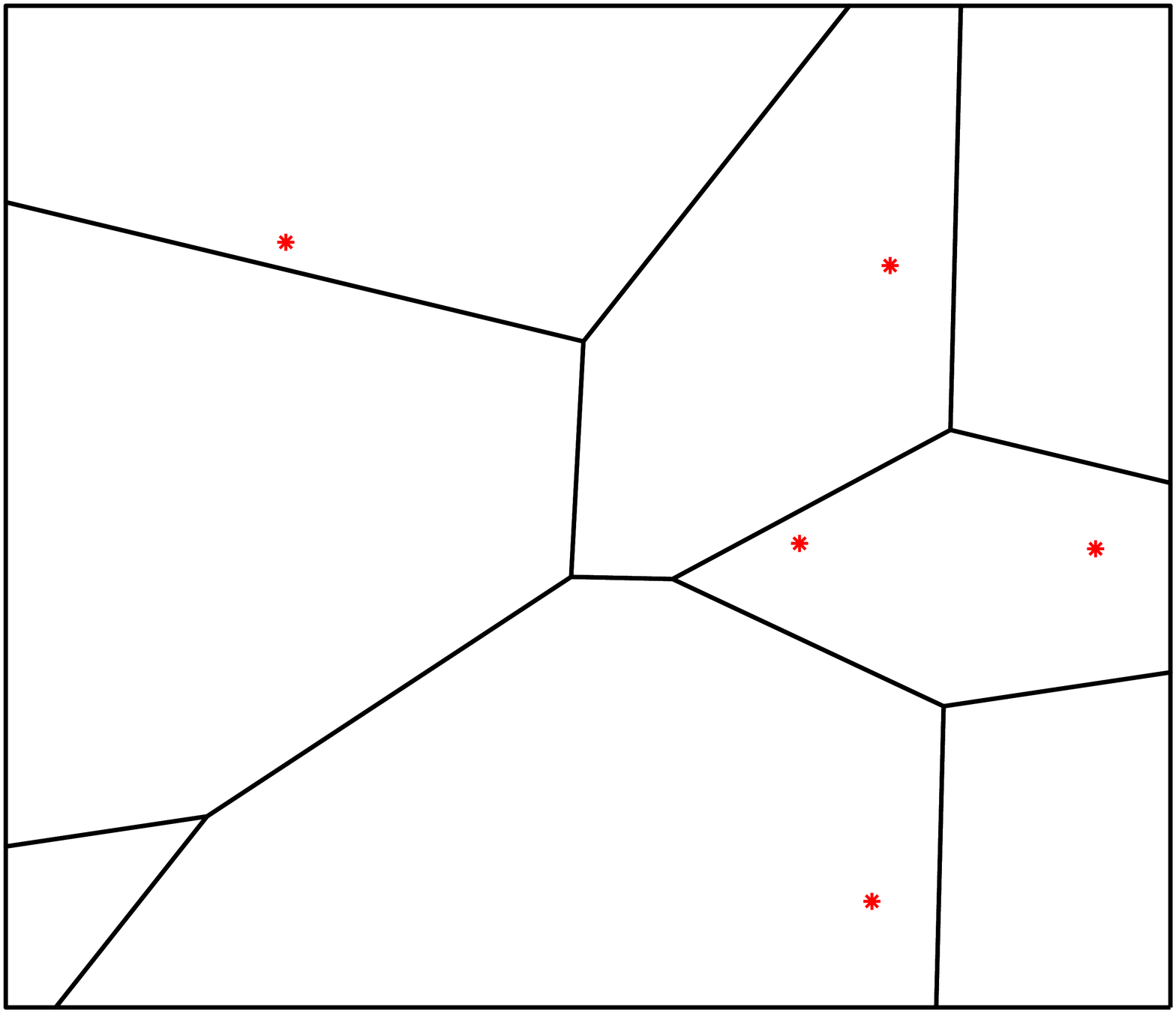} \label{fig:initial-5}}%
\subfloat[Trajectories]{\includegraphics[scale=0.16]{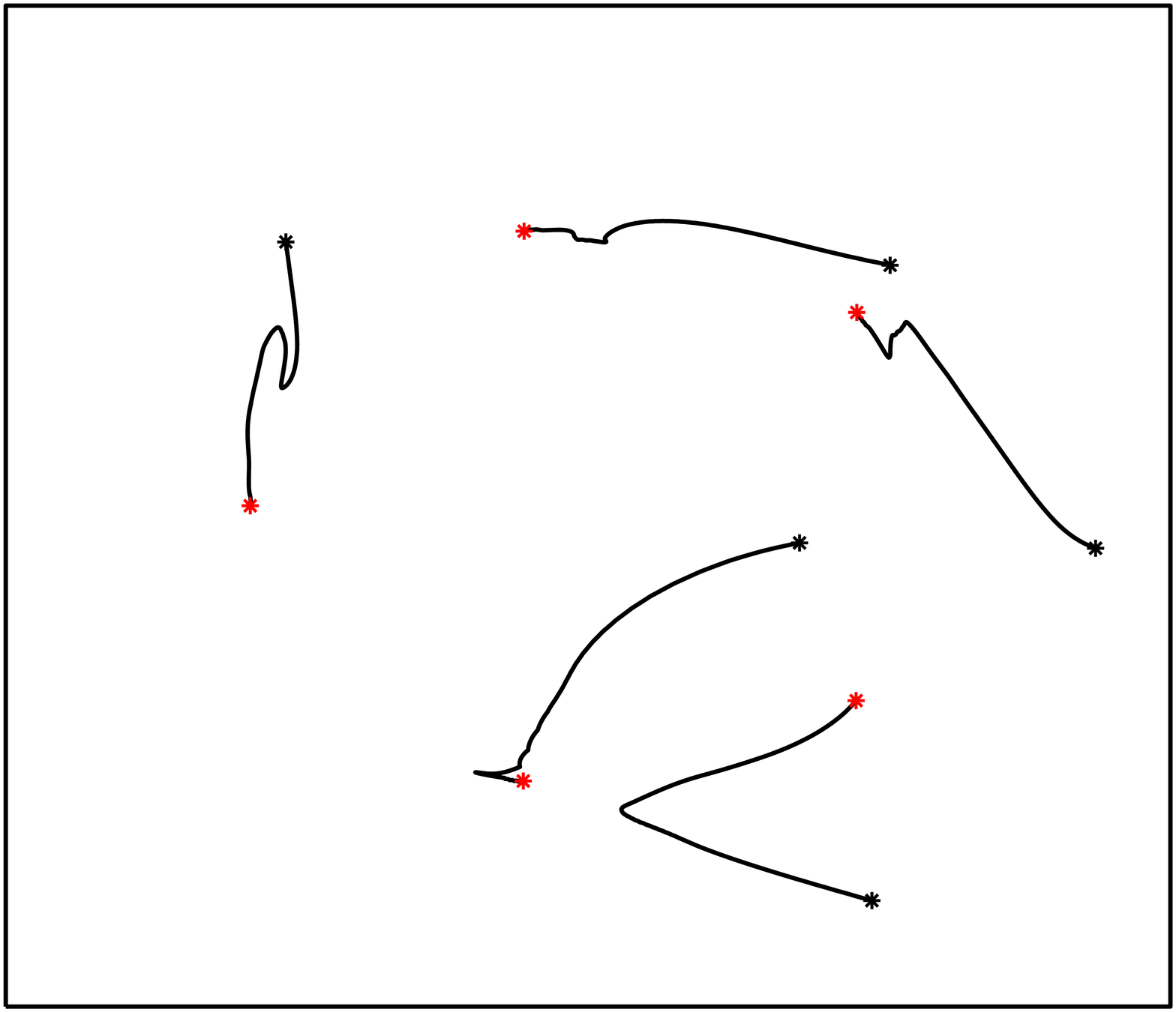} \label{fig:traj-5}}%
\subfloat[Final configuration]{\includegraphics[scale=0.16]{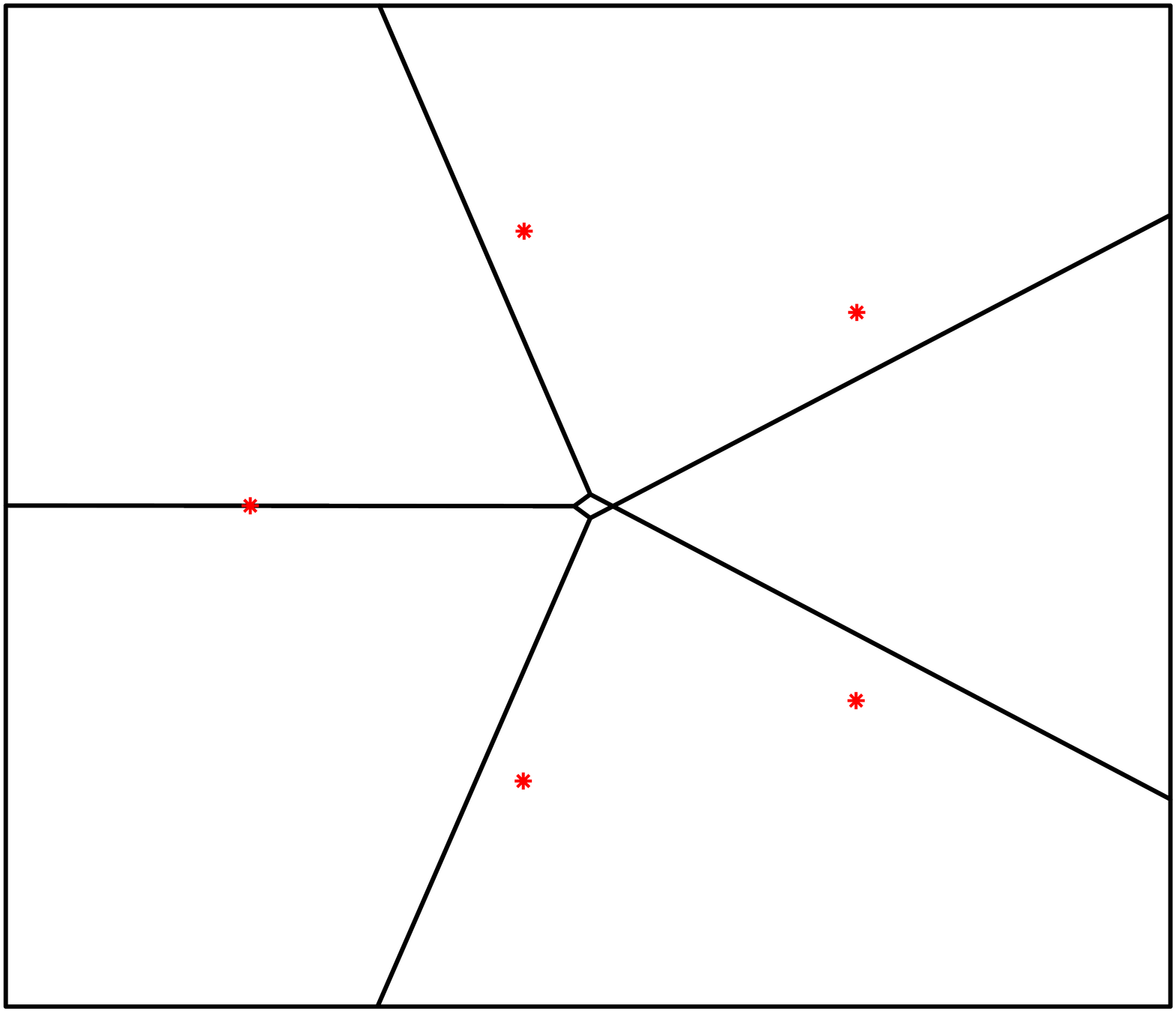} \label{fig:final-5}}%
\caption{Initial configuration (a), trajectories (b), and final configuration (c) for 5 agents running the \algoFull}%
\label{fig:sim-5}%
\end{figure*}

\begin{figure*}
\centering%
\subfloat[Performance]{\includegraphics[scale=0.16]{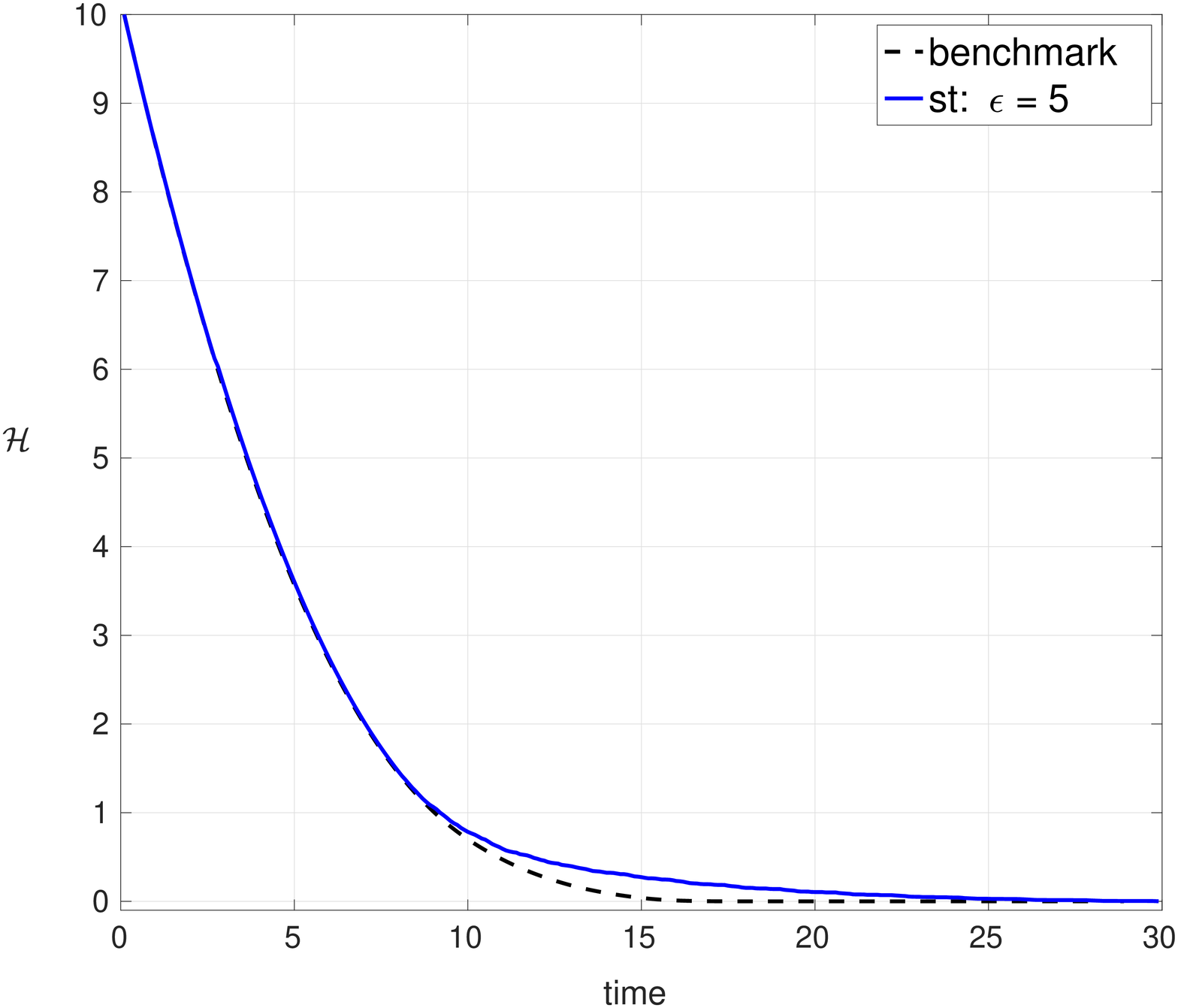} \label{fig:H}}%
~~\subfloat[Message count]{\includegraphics[scale=0.16]{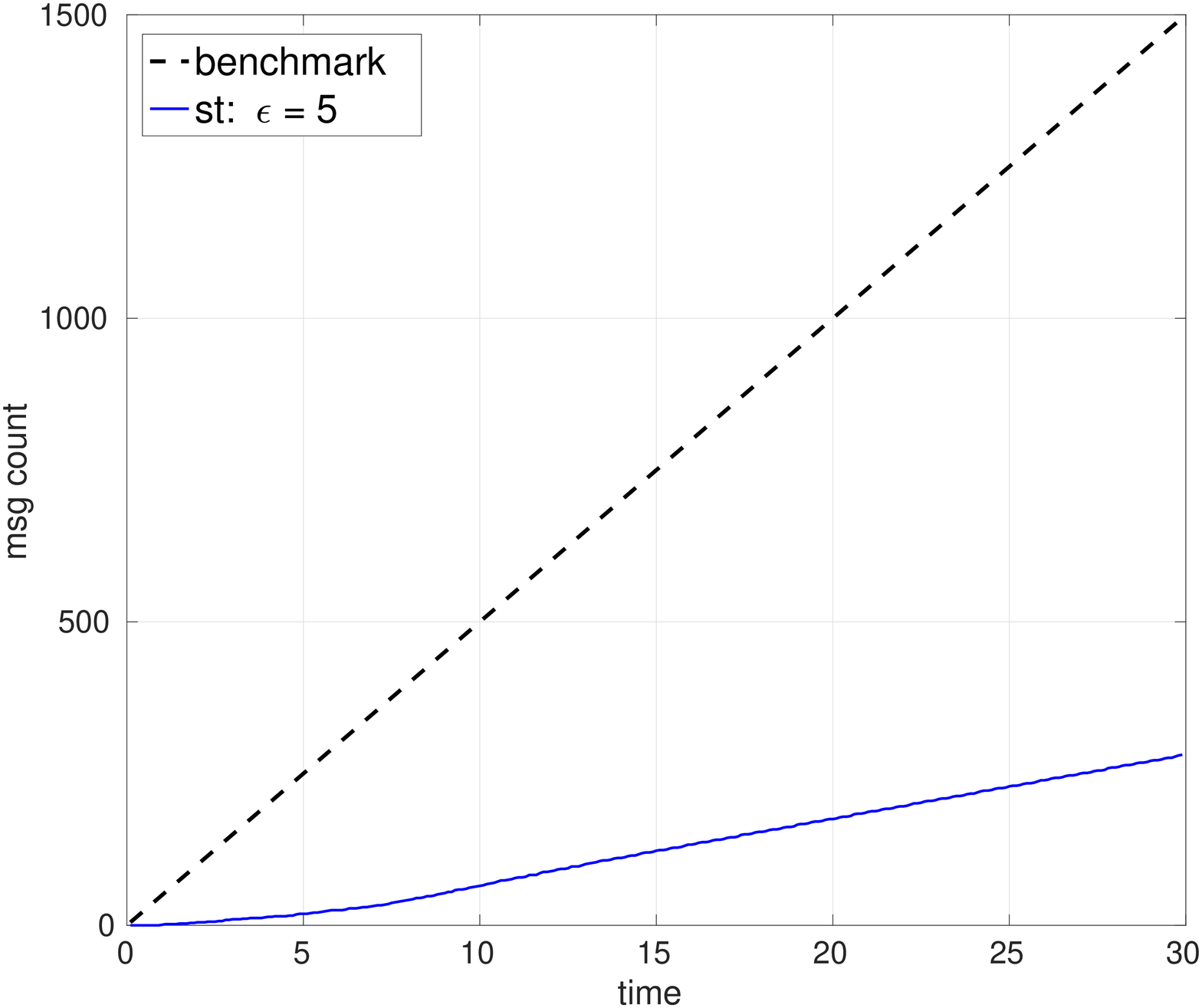} \label{fig:msg-cnt}}%
~~\subfloat[Power]{\includegraphics[scale=0.16]{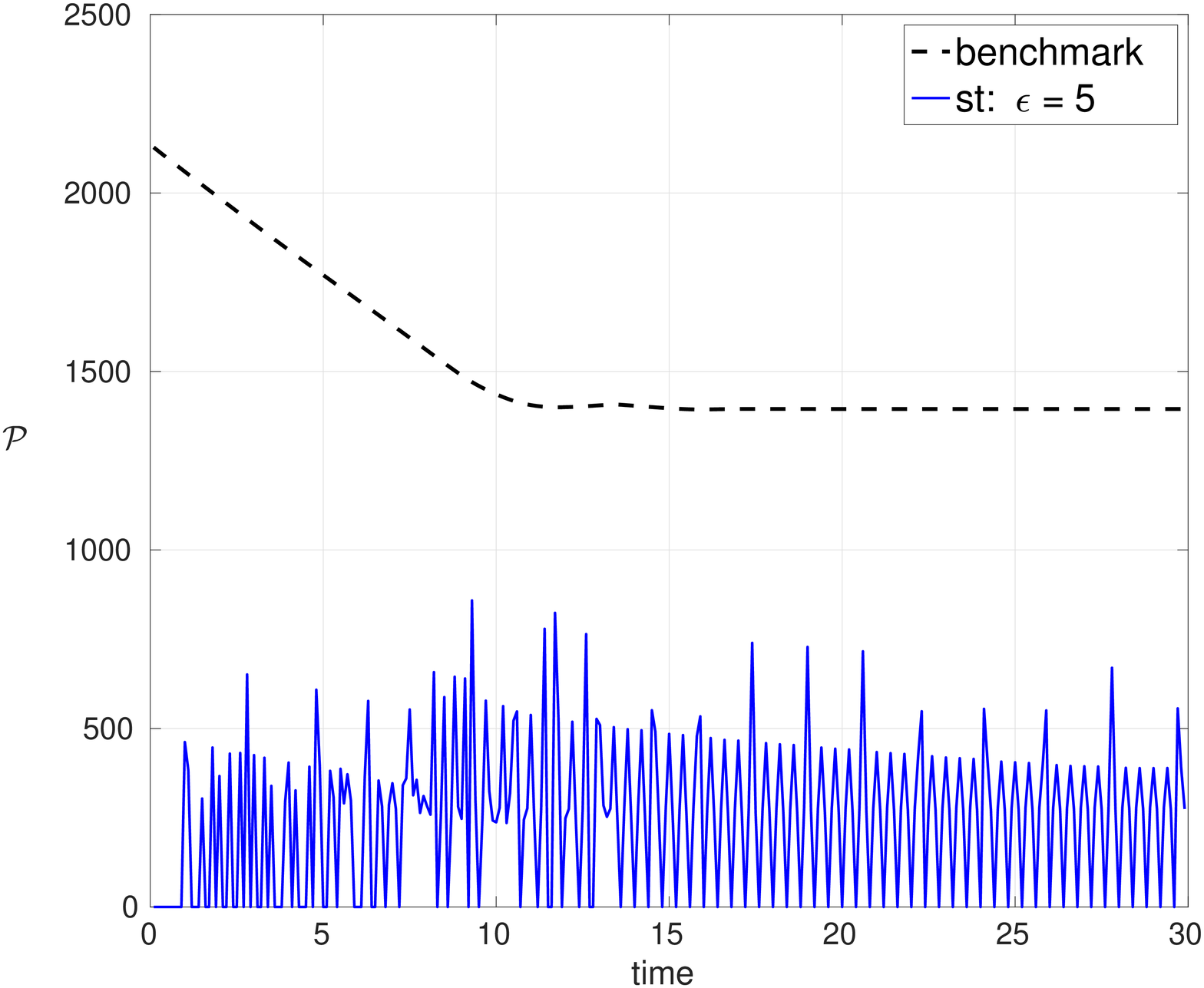} \label{fig:power}}%
\caption{Performance~(a), messages communicated between agents~(b), and power~(c) versus time.}%
\label{fig:results-5}%
\end{figure*}
\textbf{Proof of Proposition \ref{prop:main}}\\
Let $\gamma = \{\DD(t_\ell)\}_{t \in \integernonnegative}$ be an evolution of the self-triggered centroid algorithm. Define $\gamma' = \{\DD(t_\ell)\}_{t \in \integernonnegative}$ by $\DD'(t_\ell)=f_u(\DD(t_\ell))$. Note that $\loc{\DD(t_\ell)}=\loc{\DD'(t_\ell)}$. Since $\gamma'$ is a trajectory of $T$, lemma \ref{lm:weak-pos-invar} guarantees that $\Omega(\gamma')$ is weakly positively invariant and belongs to $\HH^{-1}(c)$ for some $c \in \real$. Next, it is shown that 
\begin{equation} \label{eq:omega-subset}
\begin{split}
\Omega(\gamma') \subset \setdef{\DD \in \dataspace}{i\in\ragents,
\TwoNorm{p_i^i-C_{\gW_i}} \leq \bound_i}
\end{split}
\end{equation}
We reason by contradiction. Assume there exists $\DD \in \Omega{\gamma}$ for which there is $i \in \ragents$ such that $\TwoNorm{p_i^i - C_{\gW_i}} > \bound_i$. By lemma \ref{lm:H-non-incr}, \ref{lm:motionconstraint} and the constraint given by (\ref{eq:motionconstraint}), any possible evolution from $\DD$ under $T$ will strictly decrease $\HH$. This is in contradiction with the fact that $\Omega(\gamma')$ is weakly positively invariant for $T$.

It is also noted that for each $i$ the inequality $\bound_i < \max{\{\TwoNorm{p_i^i - C_{\gW_i}},\epsilon\}}$ is satisfied at $\DD'(t_\ell))$, for all $\ell \in \integernonnegative$ an by continuity, this holds for $\Omega(\gamma')$ as well. That is,
\begin{equation}\label{eq:proof-max-bnd}
\bound_i < \max{\{\TwoNorm{p_i^i - C_{\gW_i}},\epsilon\}}
\end{equation}
for all $i \in \ragents$ and all $\DD \in \Omega(\gamma')$. Now it is shown that $\Omega(\gamma') \subset \setdef{\DD \in \dataspace}{i\in\ragents, p_i^i = C_{W_i}}$. Consider $\tilde{\DD} \in \Omega(\gamma')$. Since $\Omega(\gamma')$ is weakly positively invariant, there exists $\tilde{\DD}_1 \in \Omega(\gamma') \cap T(\tilde{\DD})$. Note that (\ref{eq:omega-subset}) implies that $\loc{\tilde{\DD}_1}=\loc{\tilde{\DD}}$ We consider two cases depending on whether agents have received information in $\tilde{\DD}_1$. If agent $i$ gets updated information then $\bound_i = 0$ and consequently from (\ref{eq:omega-subset}), $p_i^i = p'_i = C_{\gW_i} = C_{W_i}$ and the result follows. If agent~$i$ does not get updated information then $\bound(\tilde{\DD}_1^i) > \bound(\tilde{\DD}_1)$ and $\gW_i(\tilde{\DD}_1) \subset \gW_i(\tilde{\DD})$. Again using the fact that $\Omega(\gamma')$ is a weakly positively invariant set, there exist $\tilde{\DD}_2 \in \Omega(\gamma') \cap T(\tilde{\DD}_1)$ Reasoning repeatedly in this manner, the only case that needs to be discarded is when agent~$i$ never receives updated information. In this case $\TwoNorm{p_i^i - C_{\gW_i}} \rightarrow 0$ while $\bound_i$ monotonically increases towards $\diam(S)$. For sufficiently large $\ell$, $\TwoNorm{p_i^i - C_{\gW_i}} < \epsilon$. Then (\ref{eq:proof-max-bnd}) implies $\bound_i < \epsilon$, which contradicts the fact that $\bound_i$ tends towards $\diam(S)$.

\section{Simulations}\label{se:simulations}


\begin{figure*}
\centering%
\subfloat[Performance]{\includegraphics[scale=0.16]{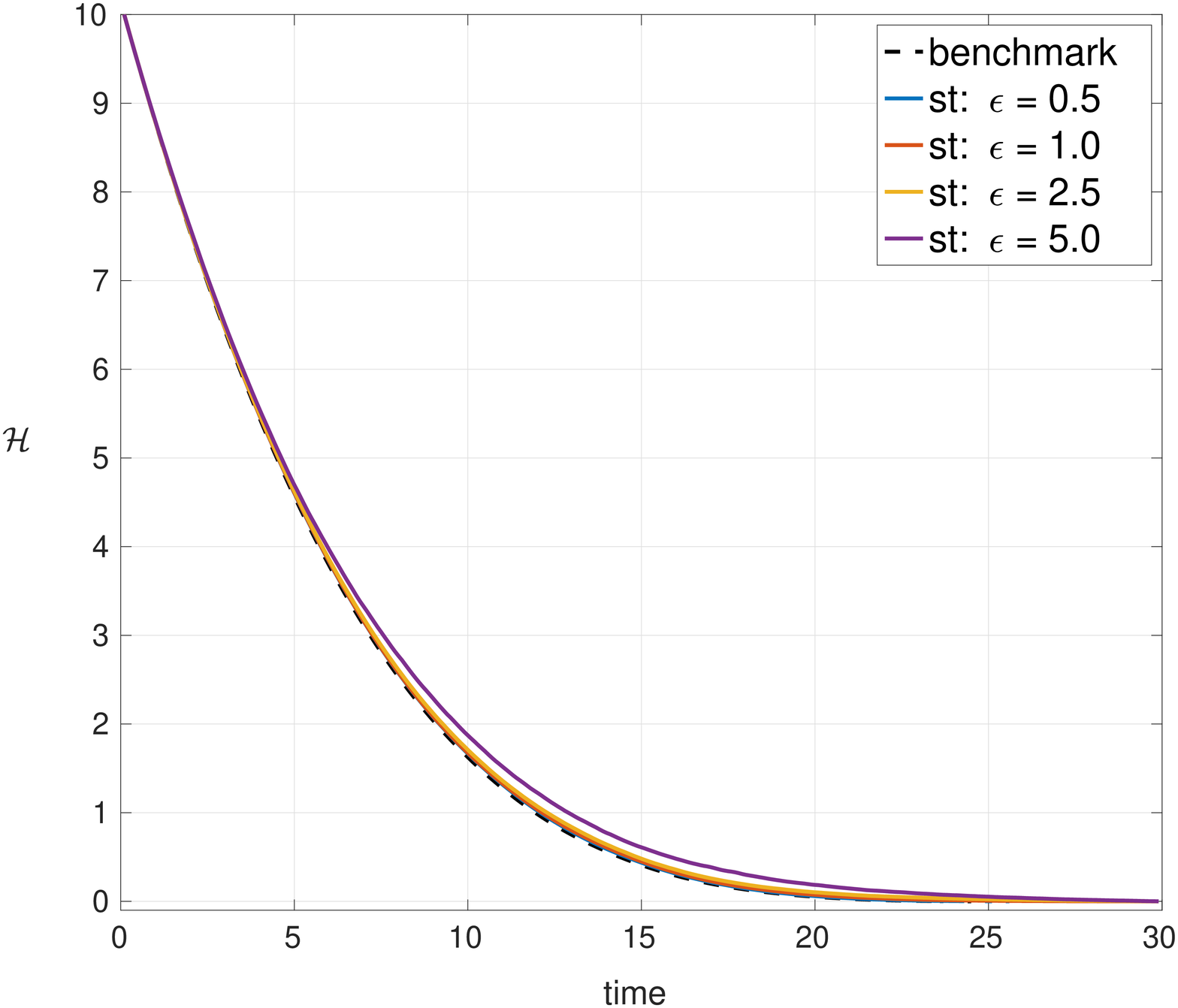} \label{fig:avg-H-5a}}%
\subfloat[Message count]{\includegraphics[scale=0.16]{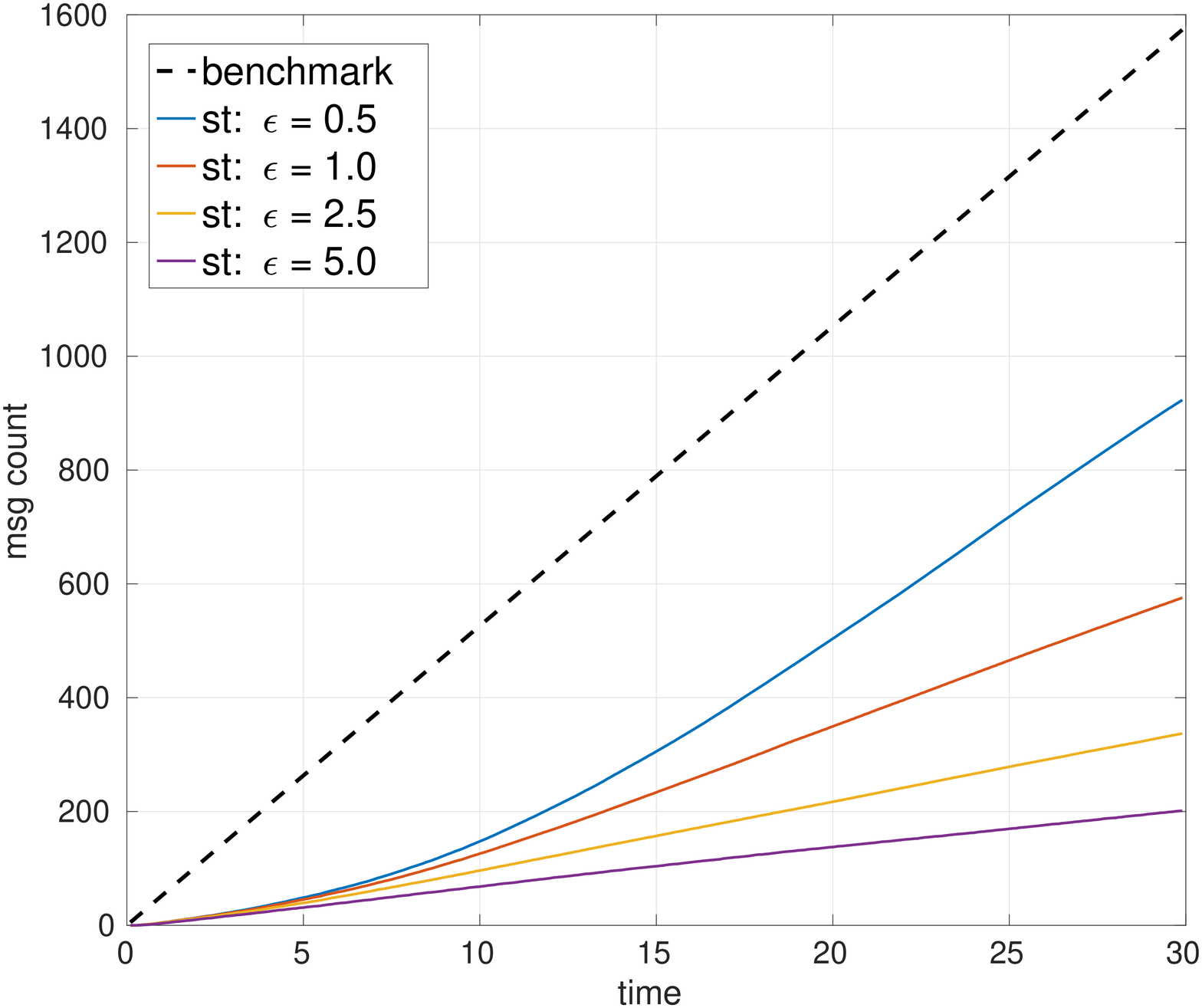} \label{fig:avg-M-5a}}%
\subfloat[Power]{\includegraphics[scale=0.16]{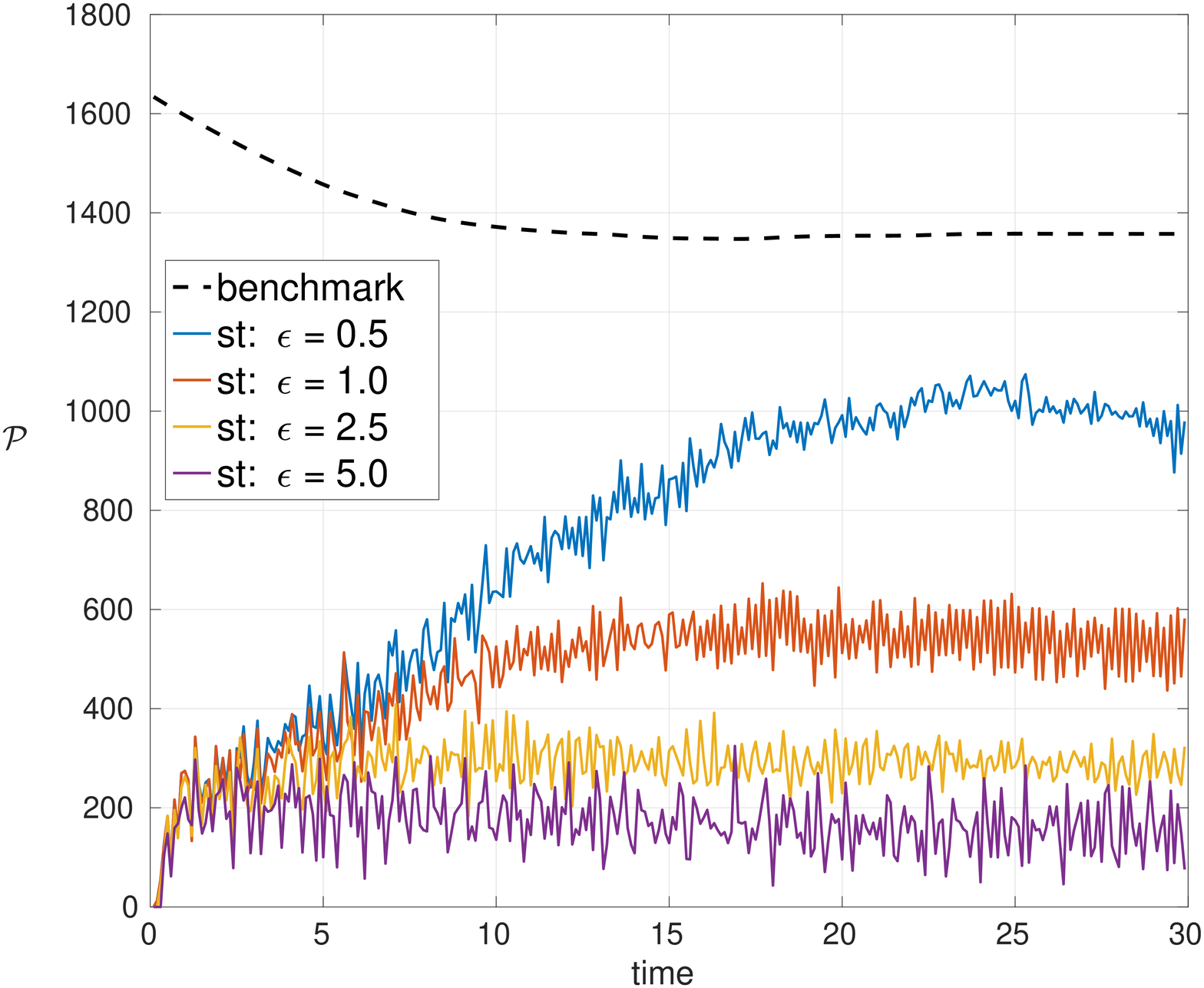} \label{fig:avg-P-5a}}%
\caption{Performance~(a), messages communicated between agents~(b), and power~(c) versus time. Average over 20 random initial configurations for different values of $\epsilon$.}
\label{fig:eps-results-5a}%
\end{figure*}

\begin{figure*}
\centering%
\subfloat[Performance]{\includegraphics[scale=0.16]{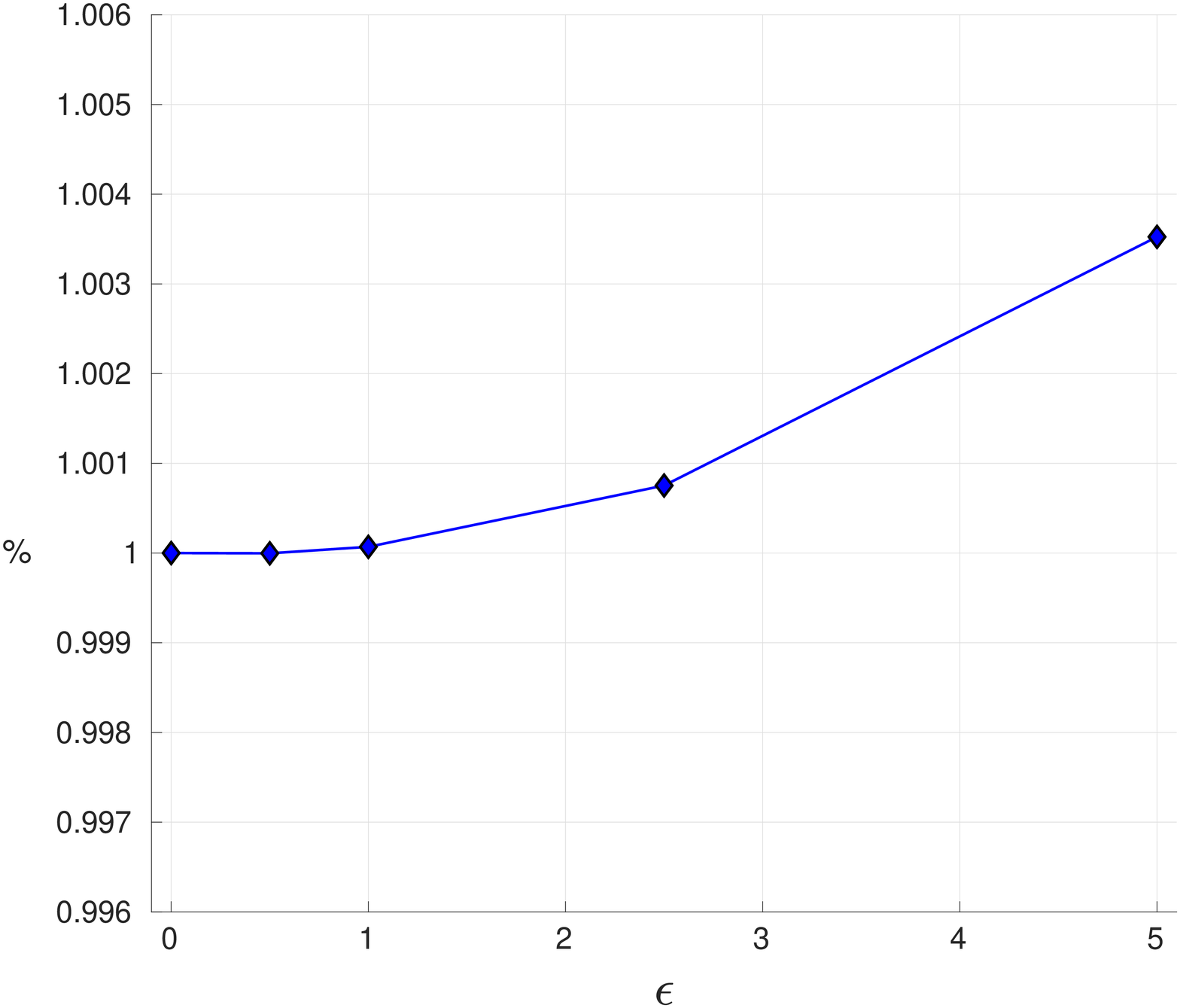} \label{fig:eps-H-5a}}%
\subfloat[Message count]{\includegraphics[scale=0.16]{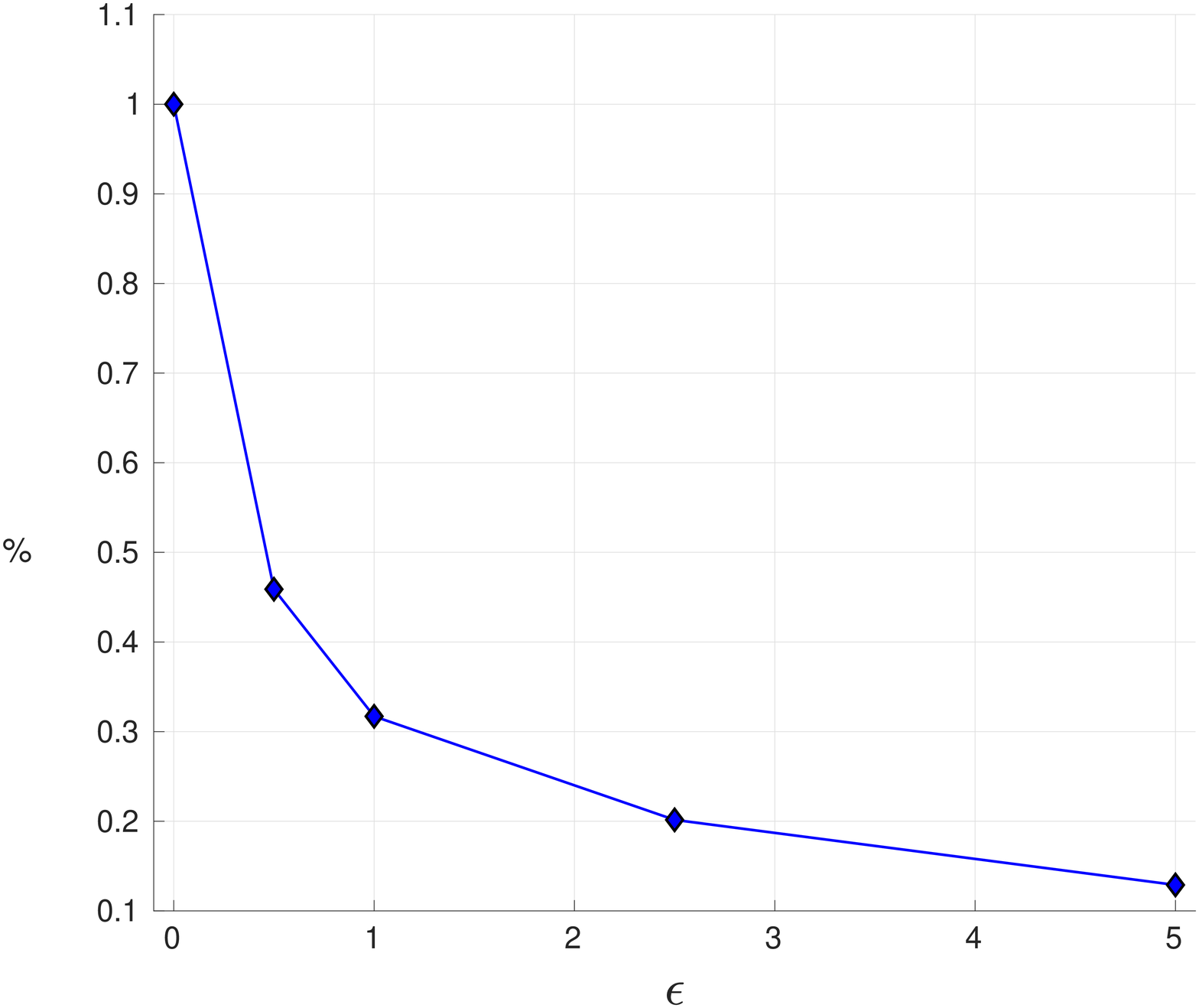} \label{fig:eps-M-5a}}%
\subfloat[Power]{\includegraphics[scale=0.16]{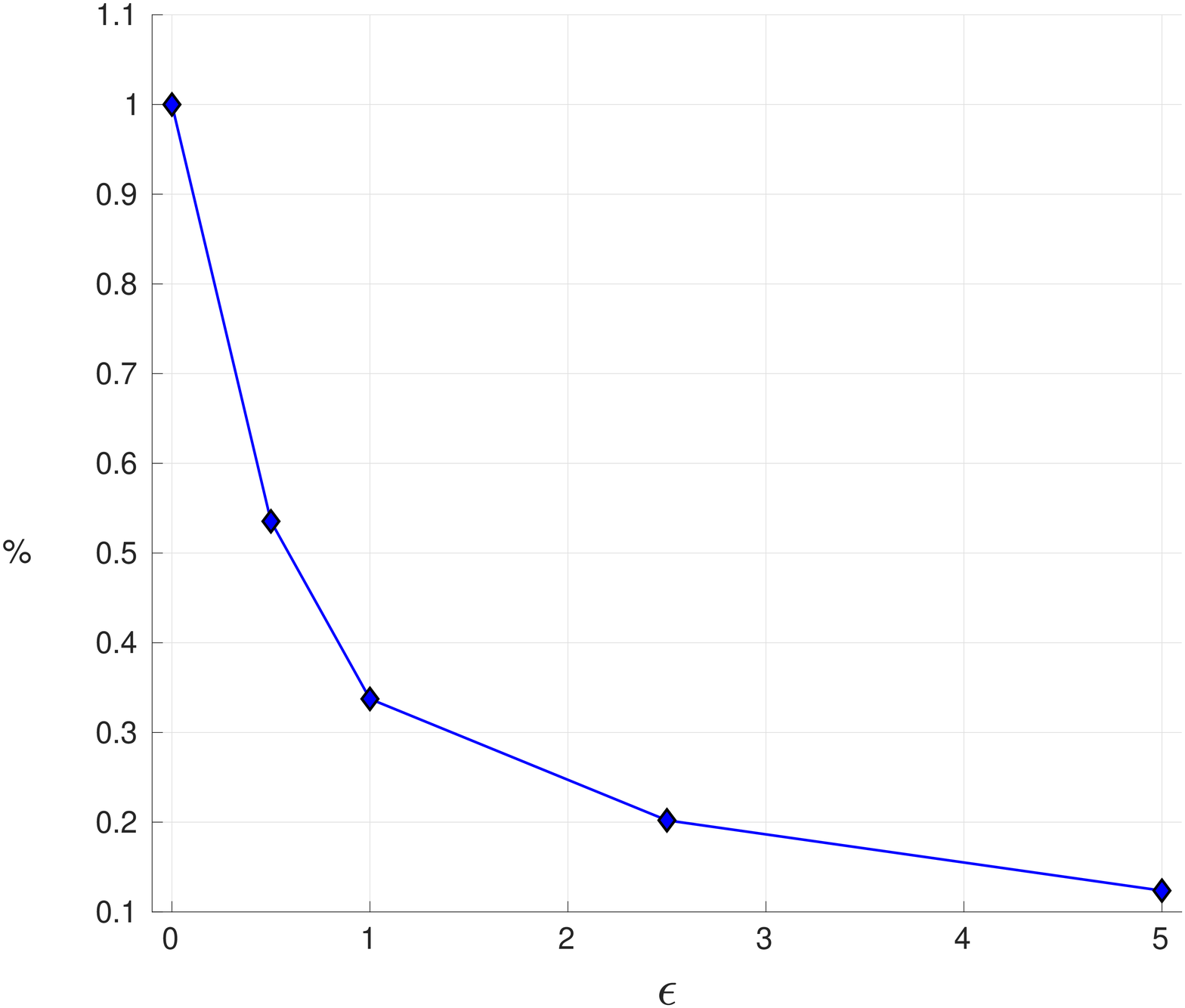} \label{fig:eps-P-5a}}%
\caption{Convergence of $\HH$~(a), total network message count~(b), and total network power~(c) averaged over 20 random initial configurations for each value of $\epsilon=(0,0.5,1,2.5,5)$ where $\epsilon=0$ corresponds to the benchmark case of continuous communication.}
\label{fig:eps-avg-5a}%
\end{figure*}

In this section, simulation results for the self-triggered deployment algorithm are presented. Simulations were performed with $n = 5$ agents moving in a $50 \text{m} \times 50 \text{m}$ area. The timestep was set to $\timestep = 0.1 \text{s}$ and all agents were given the same maximum velocity of $\vmax = 1$m/s. Multiple simulation iterations were performed by selecting different values of $\epsilon$ and generating random initial positions for agents on each iteration. Twenty iterations were carried out for each value of $\epsilon$. The values selected for $\epsilon$ were $\epsilon = \{0.5, 1.0, 2.5, 5.0\}$. To quantify the performance of the self-triggered method, the objective function $\HH$, the total transmission power, and the total number of messages transmitted were computed on every timestep. As in \cite{nowzari2012self}, the power output model for agent $i$ is given by
\[
\mathscr{P}_i = 10 \log_{10} \Bigg[ \sum_{j \in \ragents \setminus \{i\}} \beta ~10^{0.1 P_{i \rightarrow j} + \alpha \TwoNorm{p_i - p_j}} \Bigg]
\]
where $\alpha > 0$ and $\beta > 0$ are parameters that are dependent on the wireless medium and $P_{i \rightarrow j}$ is the power received from agent $i$ at agent $j$ in decibel-milliwatts. Simulation results were evaluated against a benchmark case that represents a centroidal continuous information update method where agents move toward their dominant cell centroid and positions are updated on every timestep $\timestep = 0.1 \text{s}$.

Figures \ref{fig:sim-5} and \ref{fig:results-5} display the results for the execution of a single simulation instance. Figures \ref{fig:sim-5} provides illustration of the initial configuration (\ref{fig:initial-5}), the trajectories traveled (\ref{fig:traj-5}), and the final configuration (\ref{fig:final-5}) of all agents following the self-triggered deployment strategy. Figure (\ref{fig:results-5}) shows a comparison against the benchmark case of the convergence of $\HH$ (\ref{fig:H}), the total message count (\ref{fig:msg-cnt}), and the communication power (\ref{fig:power}) at each timestep.
The results from figure \ref{fig:results-5} demonstrate how the self-triggered strategy can reduce both the total amount of communication and the power required to perform the deployment task. This is accomplished while still being capable of achieving convergence performance similar to that of a continuous or periodic communication strategy. Figures \ref{fig:eps-results-5a} and \ref{fig:eps-avg-5a} further illustrate this point by presenting results for combined values of $\epsilon$ where twenty random initial configurations for each $\epsilon$ are averaged together. In figure \ref{fig:eps-avg-5a}, the value $\epsilon=0$ corresponds to the benchmark case. These figures illustrate how varying $\epsilon$ affects the overall performance. It can be seen that the total message count and communication power decreases when the value of $\epsilon$ increases, while the the convergence rate of $\HH$ degrades. However, the convergence degradation of $\HH$ can be considered minimal when compared to the reduction in both message count and power. For the largest value $\epsilon=5$, the convergence of $\HH$ degrades by less than one-percent, while message count and communication power see a decrease of more than eighty-percent.

\section{Conclusions}\label{se:conclusions}
This paper presented a \algoFull for optimal deployment of $k$-order coverage control scenarios. The presented strategy combined an information update policy with a motion control law. The information update policy provided a method to determine when each agent should communicate with other agents in the network. Agents communicate in order to update their data storage. The decision to communicate is based on whether an agent can continue to contribute positively to the deployment objective. The motion control law provided a method for agents to move when the locations of other agents is uncertain due to the lack of communication. Through analysis, the proposed strategy was shown to provide guaranteed asymptotic convergence. The results have shown convergence similar to that of continuous and periodic position update methods. Simulation results were able to demonstrate the potential benefits of the proposed method by illustrating the ability of the \algoFull to not only reduce the amount of communication necessary to achieve the deployment goal, but also reducing the power consumed from communication.

\bibliographystyle{IEEEtran}
\bibliography{references}

\end{document}